\newcommand{\R}{\mathbb{R}}
\newcommand{\C}{\mathbb{C}}
\newcommand{\id}{\mathbbm{1}}
\newcommand{\vx}{{\mathbf{x}}}
\newcommand{\Banach}{\mathscr{B}}
\newcommand{\ret}{{\rm ret}}
\newcommand{\adv}{{\rm adv}}
\newcommand{\sym}{{\rm sym}}
\newcommand{\free}{{\rm free}}
\newcommand{\be}{\begin{equation}}
\newcommand{\ee}{\end{equation}}
\DeclareMathOperator*{\esssup}{ess \, sup}
\DeclareMathOperator{\sgn}{sgn}
\newtheorem{theorem}{Theorem}[section]
\newenvironment{proof}[1][Proof:]{\begin{trivlist}
\item[\hskip \labelsep {\bfseries #1}]}{\end{trivlist}}
\newenvironment{remark}[1][Remark:]{\begin{trivlist}
\item[\hskip \labelsep {\bfseries #1}]}{\end{trivlist}}
\newcommand{\qed}{\hfill\ensuremath{\square}}
\title{Interacting relativistic quantum dynamics of two particles\\ on spacetimes with a Big Bang singularity}
\author{
Matthias Lienert\thanks{Fachbereich Mathematik, Eberhard-Karls-Universit\"at, Auf der Morgenstelle 10, 72076 T\"ubingen, Germany.
     E-mail: lienertmat@gmail.com}\ \ and
Roderich Tumulka\footnote{Fachbereich Mathematik, Eberhard-Karls-Universit\"at, Auf der Morgenstelle 10, 72076 T\"ubingen, Germany. E-mail: roderich.tumulka@uni-tuebingen.de}
}
\date{April 16, 2019}
\begin{document}

\maketitle

  \begin{abstract}
\noindent Relativistic quantum theories are usually thought of as being quantum field theories, but this is not the only possibility. Here we consider relativistic quantum theories with a fixed number of particles that interact neither through potentials nor through exchange of bosons. Instead, the interaction can occur directly along light cones, in a way similar to the Wheeler-Feynman formulation of classical electrodynamics. For two particles, the wave function is here of the form $\psi(x_1,x_2)$, where $x_1$ and $x_2$ are spacetime points. Specifically, we consider a natural class of covariant equations governing the time evolution of $\psi$ involving integration over light cones, or even more general spacetime regions. It is not obvious, however, whether these equations possess a unique solution for every initial datum. We prove for Friedmann-Lema\^itre-Robertson-Walker spacetimes  that in the case of purely retarded interactions there does, in fact, exist a unique solution for every datum on the initial hypersurface. The proof is based on carrying over similar results for a Minkowski half-space (i.e., the future of a spacelike hyperplane) to curved spacetime. Furthermore, we show that also in the case of time-symmetric interactions and for spacetimes with both a Big Bang and a Big Crunch solutions do exist. However, initial data are then not appropriate anymore; the solution space gets parametrized in a different way.
	
  	\vspace{0.3cm}
  	
  	\noindent \textbf{Keywords}: multi-time wave functions, Wheeler-Feynman electrodynamics, relativistic quantum mechanics,  Volterra integral equations, singular integral equations, Klein-Gordon equation on curved spacetime, FLRW spacetime.
  \end{abstract}

%%%%%%%%%%%%
%%%%%%%%%%%%
\section{Introduction} \label{sec:intro}

In this paper, we study a certain type of time evolution law for a relativistic quantum-mechanical two-particle wave function $\psi(x_1,x_2)$ where $x_1, x_2$ are spacetime points. These wave functions, first suggested by Dirac in 1932 \cite{dirac_32} and used in a similar way by Tomonaga \cite{tomonaga} and Schwinger \cite{schwinger}, are called \textit{multi-time wave functions} because of the occurrence of many time coordinates in the argument of $\psi$. Their merit is to provide a manifestly covariant representation of a multi-particle quantum state. On Minkowski spacetime, they are moreover straightforwardly related to the usual non-relativistic wave functions of quantum mechanics, $\varphi(t,\vx_1,\vx_2)$ with $\vx_i\in \R^3$, by evaluation at equal times, i.e.,

\be
	\varphi(t,\vx_1,\vx_2)=\psi(t,\vx_1,t,\vx_2).
	\label{eq:relationpsiphi}
\ee
A recent overview of the theory of multi-time wave functions is provided in \cite{dice_paper}.

Interestingly, \textit{multi-time wave functions permit to formulate new types of interacting quantum dynamics that cannot so easily be expressed in the other pictures of quantum theory.} In particular \cite{direct_interaction_quantum}, multi-time wave functions make it possible to express direct relativistic interactions (not mediated by fields) at the quantum level, much as in the Wheeler-Feynman formulation of classical electrodynamics \cite{WF1,WF2}. The crucial point is that such interactions happen with a time delay, and contrary to the usual Schr\"odinger picture wave function $\varphi(t,\vx_1,\vx_2)$, a multi-time wave function $\psi(x_1,x_2)$ can be considered on configurations $(x_1,x_2)$ where $x_1,x_2$ are separated in time by this delay. (The most natural possibility is that interactions happen exactly at Minkowski distance-squared zero, i.e., along light cones, as in the Wheeler-Feynman theory.)

The following class of integral equations has been suggested as a way to define a dynamics for $\psi$ with direct relativistic interactions \cite{direct_interaction_quantum}.
\be
	\psi(x_1,x_2) = \psi^\free(x_1,x_2) + \lambda \int dV(x_1') \int dV(x_2') \, G_1(x_1-x_1') G_2(x_2-x_2') K(x_1',x_2') \psi(x_1',x_2').
	\label{eq:inteq}
\ee
We will call these equations \textit{multi-time integral equations}. Here, $G_1,G_2$ are Green's functions of relativistic wave equations, such as the Klein-Gordon (KG) or Dirac equations, which describe the free (non-interacting) dynamics of particles 1 and 2, respectively; $\psi^\free$ is a solution of these equations in the respective particle's variables, $\lambda$ a coupling constant, $dV(x_i)$ are the infinitesimal spacetime volume elements, and $K$ is the so-called \textit{interaction kernel}. On Minkowski spacetime, direct interactions along light cones are expressed by $K(x_1,x_2) \propto \delta((x_1-x_2)^2)$ where $(x_1-x_2)^2$ is the Minkowski distance-squared of $x_1$ and $x_2$. Other choices of $K$ are also possible; they correspond to different types of interactions. In any case, it is important that $K(x_1,x_2)$ be a covariant object in order for \eqref{eq:inteq} to be covariant as well.

Let us sketch briefly what makes the integral equation \eqref{eq:inteq} a natural possibility. In fact, the initial value problem $\varphi(0,\vx_1,0,\vx_2)=\varphi_0(\vx_1,\vx_2)$ for the non-relativistic two-particle Schr\"odinger equation
\be
	i \partial_t\varphi(t,\vx_1,\vx_2) = \left( H_1^\free + H_2^\free + \lambda V(t,\vx_1,\vx_2) \right) \varphi(t,\vx_1,\vx_2)
\ee
can equivalently be formulated as the following integral equation:
\begin{align}
	\varphi(t,\vx_1,\vx_2) = \varphi^\free(t,\vx_1,\vx_2) +& \lambda \int_0^\infty dt' \int d^3 \vx_1' \, d^3 \vx_2' ~G_1(t-t',\vx_1-\vx_1')\nonumber\\
& \times G_2(t-t',\vx_2-\vx_2') V(t',\vx_1',\vx_2') \varphi(t',\vx_1',\vx_2').
\label{eq:schroedint}
\end{align}
Here, $\varphi^\free(t,\vx_1,\vx_2) = (e^{-i(H_1^\free + H_2^\free)t} \varphi_0)(\vx_1,\vx_2)$ is the solution of the free Schr\"odinger equation (without the potential $V$) for the initial data $\varphi_0$, and $G_k$, $k=1,2$ are the retarded Green's functions of the operators $(i\partial_t-H_k^\free)$, respectively. Bearing in mind Lorentz covariance and the relation \eqref{eq:relationpsiphi}, one can now see why \eqref{eq:inteq} is a natural generalization of \eqref{eq:schroedint} to the relativistic case. Moreover, the comparison suggests that it is natural to consider retarded Green's functions $G_i^\ret$. Besides, we also consider the symmetric Green's functions $G^\sym_i = \frac{1}{2}(G^\ret_i + G^\adv_i)$ a natural choice, as it leads to time-reversal invariance (on Minkowski spacetime). Note that this choice of Green's functions $G_i$ determines whether the interaction term depends on the past and/or on the future.

For the theory of multi-time wave functions, integral equations of the form \eqref{eq:inteq} are interesting because they create a new mechanism for relativistic interactions. Relativistic interactions are notoriously difficult to achieve for multi-time wave functions \cite{nogo_potentials,deckert_nickel_2016}, and have so far only been realized for a few examples \cite{drozvincent_1981,2bd,qftmultitime,multitime_pair_creation,1d_model,nt_model}.

Another source of motivation for considering \eqref{eq:inteq} is that the well-known Bethe-Salpeter equation, which is derived from QFT and describes relativistic bound states of two particles (see \cite{bs_equation} and \cite[chap. 6]{greiner_qed}), has a similar form,\footnote{The differences are discussed in \cite{direct_interaction_quantum}.} with $G_1, G_2$ given by Feynman propagators and $K$ by an infinite series of Feynman diagrams.

As \eqref{eq:inteq} constitutes a new type of evolution equation, it is important to demonstrate that it is mathematically well-defined by proving the existence and uniqueness of solutions. First results in this direction have been obtained in a preceding paper by the authors \cite{mtve}. It has been shown for simplified models in flat spacetime that \eqref{eq:inteq} has a unique solution $\psi$ for every $\psi^\free$. These models focus on the Klein-Gordon case and use retarded Green's functions---so that the interaction term depends only on the past. Then arbitrary bounded interaction kernels and certain singular interaction kernels with a special type of singularity have been covered. Perhaps the most crucial assumption in \cite{mtve} was to postulate a beginning in time, i.e., such that the time integrals in \eqref{eq:inteq} begin not at $-\infty$ but at 0. This greatly simplifies the problem, as \eqref{eq:inteq} then attains a Volterra structure in the time variables. That means that the time variables are integrated merely from 0 to $t_k$; i.e., the integral operator in \eqref{eq:inteq} is of the form
\be\label{Volterra}
\int_0^{t_1} dt_1' \int d^3\vx_1' \int_0^{t_2} dt_2' \int d^3\vx_2' \: \kappa(x_1,x_2,x_1',x_2') \, \psi(x_1',x_2')
\ee
with $x_k=(t_k,\vx_k)$ etc. The beginning in time at $t=0$ was a crude model for
the Big Bang singularity which our universe is believed to have. Of course, the Big Bang singularity should be implemented in a physically natural way, not just by cutting off the time axis at $t=0$. Hence the need to find an appropriate generalization of the integral equation \eqref{eq:inteq} for curved spacetimes which naturally feature a Big Bang singularity, such as Friedmann-Lema\^itre-Robertson-Walker (FLRW) spacetimes. Then one has to show that the Big Bang singularity is still compatible with the existence and uniqueness of solutions. To address this circle of questions is the goal of the present paper.

We shall approach the problem as follows. In Sec. \ref{sec:previousresults}, we review the explicit form of the integral equation \eqref{eq:inteq} on a Minkowski half-space with 1+$d$ dimensions for $d=1,2,3$ and summarize the results of the previous paper \cite{mtve}. (The details will be needed later.) Next, we show how to formulate, in general terms, the integral equation on arbitrary curved spacetimes (Sec. \ref{sec:inteqcurvedgeneral}). Sec. \ref{sec:flrw} briefly summarizes the main facts about FLRW spacetimes. Sec. \ref{sec:greensfnscurved} then deals with finding an explicit formulation of the integral equation for particular cases. This mainly requires calculating the Green's functions $G_1, G_2$, which is done for the case of massless scalar (Klein-Gordon) particles on various FLRW spacetimes using conformal invariance. We arrive at a number of explicit models for curved spacetimes (Sec. \ref{sec:explicitexamples}). Sec. \ref{sec:existence} contains our main results. In Sec. \ref{sec:existenceflat}, we show that the theorems in \cite{mtve} carry over to the case of flat FLRW universes. The main result are rigorous existence and uniqueness theorems for a class of integral equations with sufficiently regular interaction kernels for $d=1,2,3$ dimensions (Thms. \ref{thm:flatflrw} and \ref{thm:singkernelflatflrw}). Remarkably, these results cover a class of manifestly covariant interacting models in 1+3 spacetime dimensions. In Sec. \ref{sec:existenceclosed}, we study the case of time-symmetric interactions for closed FLRW spacetimes. The main result here is an existence and uniqueness theorem for small coupling constants $\lambda$ (Thm. \ref{thm:existenceclosedsingular}). In Sec. \ref{sec:conclusion}, we conclude and point out interesting problems for future research.

%%%%%%%%%%%%
%%%%%%%%%%%%
\section{Previous results for a Minkowski half-space} \label{sec:previousresults}

In this section, we formulate the integral equation \eqref{eq:inteq} on a (1+$d$)--dimensional Minkowski half-space for $d=1,2,3$. After that, we state the existence and uniqueness results of the previous paper \cite{mtve}. We shall give the full details, as they are necessary later in the paper.

Minkowski half-space here means $[0,\infty) \times \R^d$ equipped with the Minkowski metric. We shall denote this spacetime by $\frac{1}{2} \R^{1,d}$.
For the rest of the paper, we furthermore set $\hbar = c =1$ and use the metric signature $+\underbrace{- \cdots -}_d$. A point $x \in \frac{1}{2} \R^{1,d}$ is denoted by $x=(t,\vx)$ with $\vx \in \R^d$. Moreover, we use the abbreviation $x^2 = t^2 -|\vx|^2$. 

Eq. \eqref{eq:inteq} becomes fully specified by the choices (a) of the free wave equations for particles 1 and 2, (b) of the particular Green's functions of these equations, and (c) of the interaction kernel $K$. Here, we shall choose the Klein-Gordon (KG) equations
\be
	(\square_k + m_k) \psi^\free = 0,~~k=1,2
\ee
as the free wave equations. Moreover, we focus on the case of retarded Green's functions, $G_k = G_k^\ret$. As detailed in \cite{direct_interaction_quantum}, a very natural choice of the interaction kernel is $K(x_1,x_2) = G^\sym$, where $G^\sym$ stands for the symmetric Green's function of the massless KG equation (i.e., the wave equation) in the respective spacetime dimension $d$. The reasons for this are (i) interactions along light cones are given by $K(x_1,x_2) = \frac{1}{2\pi} \delta((x_1-x_2)^2) = G^\sym(x_1,x_2)$ on Minkowski spacetime, and (ii) for indistinguishable particles $K(x_1,x_2) = K(x_2,x_1)$ is necessary (and satisfied by $K = G^\sym$). 
That being said, other choices for $K$ are also possible, corresponding to different kinds of interactions. Note also that the above arguments concern the choice of $K$, not of $G_1,G_2$ where the situation is different.

\paragraph{Green's functions.} To specify \eqref{eq:inteq}, we need to know the Green's functions of the KG equation. 
The symmetric Green's functions are given by (see \cite[chap. 7.4]{zauderer}, \cite[appendix E]{birula_qed}):
\begin{table}[h]
\centering
\begin{tabular}{|l|l|}
\hline
	&\\[-3mm]
	$d$ & $G^\sym$\\[2mm]
	\hline \hline 
	&\\[-3mm]
	1 & $\frac{1}{2} H(x^2) J_0(m \sqrt{x^2})$\\[2mm]
	\hline
	&\\[-3mm]
	2 & $\frac{1}{2\pi} H(x^2) \frac{\cos(m\sqrt{x^2})}{\sqrt{x^2}}$\\[2mm]
	\hline
	&\\[-3mm]
	3 & $\frac{1}{2\pi} \delta(x^2) - \frac{m}{4\pi \sqrt{x^2}} H(x^2) J_1(m\sqrt{x^2})$\\[2mm]
	\hline
\end{tabular}
\caption{Symmetric Green's functions of the KG equation on Minkowski spacetime}
\label{tab:greensfnsmink}
\end{table}

\noindent Here, $H(s)$ denotes the Heaviside function and $J_0, J_1$ are Bessel functions of the first kind.
From $G^\sym$, one obtains the retarded Green's functions by
\be
	G^\ret(x) = H(x^0) \, G^\sym.
\ee

These ingredients allow us to write down Eq. \eqref{eq:inteq} on $\frac{1}{2} \R^{1,d}$ for the various dimensions $d$. We shall do this for arbitrary interaction kernels. Note that the Green's functions on $\R^{1,d}$ and $\frac{1}{2}\R^{1,d}$ have the same form. 

%%%
\paragraph{d=1:}
\begin{align}
	&\psi(t_1,z_1,t_2,z_2) = \psi^\free(t_1,z_1,t_2,z_2) + \frac{\lambda}{4} \int_0^{t_1} dt_1' \int_0^{t_2} dt_2' \int dz_1' \, dz_2'~ H(t_1-t_1'-|z_1-z_1'|)\nonumber\\
&~\times~ J_0\Big(m_1\sqrt{(t_1-t_1')^2-|z_1-z_1'|^2}\Big) \, H(t_2-t_2'-|z_2-z_2'|) \, J_0\Big(m_2\sqrt{(t_2-t_2')^2-|z_2-z_2'|^2}\Big)\nonumber\\[1mm]
&~\times~ K(t_1',z_1',t_2',z_2') \, \psi(t_1',z_1',t_2',z_2').
\label{eq:inteq1dsimplified}
\end{align}
Note that in $d=1$ the natural interaction kernel $K(t_1,z_1,t_2,z_2) = \frac{1}{2} \, H( (t_1-t_2)^2-|z_1-z_2|^2)$ (corresponding to $m=0$) is bounded.

%%%
\paragraph{d=2:}
\begin{align}
	&\psi(t_1,\vx_1,t_2,\vx_2) = \psi^\free(t_1,\vx_1,t_2,\vx_2) + \frac{\lambda}{(2\pi)^2} \int_0^{t_1} dt_1' \int_0^{t_2} dt_2' \int d^2 \vx_1'\, d^2 \vx_2'~\nonumber\\
& ~\times~ H(t_1-t_1'-|\vx_1-\vx_1'|)\, \frac{\cos\big(m_1\sqrt{(t_1-t_1')^2-|\vx_1-\vx_1'|^2}\big)}{\sqrt{(t_1-t_1')^2-|\vx_1-\vx_1'|^2}} \,
H(t_2-t_2'-|\vx_2-\vx_2'|) \, \nonumber\\
&~\times~ \frac{\cos\big(m_2\sqrt{(t_2-t_2')^2-|\vx_2-\vx_2'|^2}\big)}{\sqrt{(t_2-t_2')^2-|\vx_2-\vx_2'|^2}}\, K(t_1',\vx_1',t_2',\vx_2') \,  \psi(t_1',\vx_1',t_2',\vx_2').
\label{eq:inteq2dsimplified}
\end{align}

%%%
\paragraph{d=3:} For simplicity, we consider the massless case here. (The most singular terms are still included in the equation.) We obtain the equation:
\begin{align}
	&\psi(t_1,\vx_1,t_2,\vx_2) = \psi^\free(t_1,\vx_1,t_2,\vx_2) + \frac{\lambda}{(4\pi)^2} \int_0^{t_1} dt_1' \int d^3 \vx_1' \int_0^{t_2} dt_2' \int d^3 \vx_2'~  \nonumber\\
& ~\times~\frac{\delta(t_1-t_1'-|\vx_1-\vx_1'|)}{|\vx_1-\vx_1'|}\, \frac{\delta(t_2-t_2'-|\vx_2-\vx_2'|)}{|\vx_2-\vx_2'|} \, 
K(t_1',\vx_1',t_2',\vx_2') \, \psi(t_1',\vx_1',t_2',\vx_2').
\label{eq:inteq3d}
\end{align}
Formally integrating out the $\delta$-functions leads to:
\begin{multline}
	\hspace{-3mm}\psi(t_1,\vx_1,t_2,\vx_2) = \psi^\free(t_1,\vx_1,t_2,\vx_2) + \frac{\lambda}{(4\pi)^2}  \int d^3 \vx_1' \, d^3 \vx_2'~  \frac{H(t_1-|\vx_1-\vx_1'|)}{|\vx_1-\vx_1'|}\, \frac{H(t_2-|\vx_2-\vx_2'|)}{|\vx_2-\vx_2'|} \\[2mm]
 \times~ K(t_1-|\vx_1-\vx_1'|,\vx_1',t_2-|\vx_2-\vx_2'|,\vx_2') \,  \psi(t_1-|\vx_1-\vx_1'|,\vx_1',t_2-|\vx_2-\vx_2'|,\vx_2'),
\label{eq:inteq3dsimplified}
\end{multline}
where the Heaviside functions result from the lower limits of $t_1,t_2$ in \eqref{eq:inteq3d}. We shall consider this equation directly instead of \eqref{eq:inteq3d}.

 With these preparations, we are ready to state the previous results from \cite{mtve}.

%%%
\paragraph{Results for bounded interaction kernels.}
\begin{theorem}[See {\cite[thms. 3.2, 3.3, 3.4]{mtve}}.]\label{thm:boundedkernels}
	Let $d\in\{1,2,3\}$ and $K: \R^{2(1+d)} \rightarrow \C$ be a bounded function. Let furthermore $T>0$ and consider the Banach space
\be
    \Banach_d := L^\infty \big([0,T]^2_{(t_1,t_2)}, L^2(\R^{2d}_{(\vx_1,\vx_2)}) \big),
    \label{eq:banach}
\ee
with norm
\be
    \| \psi\|_{\Banach_d} = \esssup_{t_1,t_2 \in [0,T]} \| \psi(t_1,\cdot,t_2,\cdot) \|_{L^2} \:.
    \label{eq:norm}
\ee
Then for every $\psi^\free \in \Banach_d$, the above-mentioned multi-time integral equation for that dimension $d$ (one of the equations \eqref{eq:inteq1dsimplified}, \eqref{eq:inteq2dsimplified}, \eqref{eq:inteq3dsimplified}) has a unique solution $\psi \in \Banach_d$.
\end{theorem}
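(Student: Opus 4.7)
The natural strategy is to rewrite the three integral equations uniformly as $\psi = \psi^\free + \lambda A\psi$ on the Banach space $\Banach_d$, where $A$ is the integral operator defined by the right-hand side minus $\psi^\free$, and then to invert $\id - \lambda A$ by a Neumann series. Because $K$ is bounded, the main analytic work is to estimate the spatial convolutions with the Green's functions $G_k^\ret$, and to exploit the Volterra structure in the time variables (coming from the lower limits $0$ in the $t_k'$-integrals, or, in $d=3$, from the retardation $t_k' = t_k - |\vx_k - \vx_k'| \le t_k$) in order to obtain factorial decay of the iterates $A^n$.

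The first step is to show that $A$ is a bounded linear operator on $\Banach_d$. Fix $(t_1,t_2) \in [0,T]^2$ and apply Minkowski's integral inequality to move the $L^2(\R^{2d})$-norm inside the $(\vx_1',\vx_2')$-integrals, exploiting translation invariance of the convolution in the spatial variables. One is then reduced to estimating, for each time slice, the $L^1(\R^d)$-norm of the spatial kernel:
\begin{itemize}
\item For $d=1$, $\int_{|z|\le t} |J_0(m\sqrt{t^2-z^2})|\, dz \le 2t$ since $J_0$ is bounded.
\item For $d=2$, $\int_{|\vy|\le t} |\vy|^{-1}\cdot|\vy|(\cos/\sqrt{t^2-|\vy|^2})\,d^2\vy$ gives $\int_0^t r/\sqrt{t^2-r^2}\,dr = t$, which is finite.
\item For $d=3$, using the simplified form \eqref{eq:inteq3dsimplified}, $\int_{|\vy|\le t} |\vy|^{-1}\,d^3\vy = 2\pi t^2$.
\end{itemize}
In each case, combining the two particles yields
\begin{equation*}
\|(A\psi)(t_1,\cdot,t_2,\cdot)\|_{L^2(\R^{2d})} \le C \int_0^{t_1} w_d(t_1,s_1)\,ds_1 \int_0^{t_2} w_d(t_2,s_2)\,ds_2\; \|\psi(s_1,\cdot,s_2,\cdot)\|_{L^2(\R^{2d})},
\end{equation*}
with a locally bounded weight $w_d$ (constant for $d=1,2$, and $w_3(t,s)=t-s$ for $d=3$ after passing to radial coordinates). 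In particular $\|A\|_{\Banach_d\to\Banach_d}$ is finite.

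The second step is iteration. The pointwise bound above is a linear Volterra inequality in $(t_1,t_2)$, and iterating it $n$ times yields
\begin{equation*}
\|A^n\psi\|_{\Banach_d} \le \frac{(C\,T^{\alpha})^{n}}{(n!)^{2}}\, \|\psi\|_{\Banach_d}
\end{equation*}
for a suitable $\alpha = \alpha(d)$ (e.g.\ $\alpha=2$ in $d=1,2$ and $\alpha=4$ in $d=3$), because the nested time integrals of the kernel give the classical Volterra factor $t^{k}/k!$ for each particle. Consequently $\sum_{n\ge0}\lambda^{n}A^{n}$ converges in operator norm for every $\lambda\in\C$, which simultaneously proves existence, via $\psi = \sum_{n\ge0}\lambda^{n}A^{n}\psi^\free$, and uniqueness, since any two solutions in $\Banach_d$ differ by an element of $\ker(\id-\lambda A) = \{0\}$.

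The main obstacle is handling the near-singular behaviour of the Green's functions on the light cone, most acutely the $\delta(x^{2})$ term in $d=3$ and the $1/\sqrt{x^{2}}$ singularity in $d=2$. In $d=2$ the singularity is integrable in the disc $|\vy|\le t$, so Young/Minkowski goes through. In $d=3$ the $\delta$-function cannot be directly paired with an $L^{2}$ function, but the simplified form \eqref{eq:inteq3dsimplified} resolves the distribution once and for all, reducing the analysis to a manifestly $L^{2}$-bounded convolution with the integrable kernel $H(t-|\vy|)/|\vy|$; verifying that this simplification is compatible with the $\Banach_3$-framework (in particular, that the retarded evaluation $\psi(t_k-|\vx_k-\vx_k'|,\cdot)$ makes sense after applying $A$, so that the iteration is well-defined) is the most delicate technical check.
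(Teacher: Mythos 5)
The paper does not prove this theorem itself but imports it from the earlier work [mtve], whose argument is precisely the one you outline: reduce to a Volterra-type inequality for $\|\psi(t_1,\cdot,t_2,\cdot)\|_{L^2}$ via Young/Minkowski bounds on the spatial (retarded) kernels, iterate to get factorial decay of $A^n$, and sum the Neumann series for every $\lambda$ (this is exactly the "Volterra structure" advantage the paper emphasizes in its conclusions over the small-$\lambda$ abstract argument). Your proposal is correct and essentially the same approach, including the correct weight $t-s$ in $d=3$ where the operator is not a pure spatial convolution because of the retarded time argument.
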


%%%
\paragraph{Result for a special singular interaction kernel in $d=3$.}
Apart from bounded interaction kernels, also the following singular interaction kernel for $d=3$ has been studied in \cite{mtve}:
\be
	K(t_1,\vx_1,t_2,\vx_2) = \frac{f(t_1,\vx_1,t_2,\vx_2)}{|\vx_1-\vx_2|},
\label{eq:specialsingkernel}
\ee
where $f:\R^8 \rightarrow \C$ is an arbitrary bounded function. This kernel imitates the structure of the natural interaction kernel in $d=3$, $K(t_1,\vx_1,t_2,\vx_2) = \delta((t_1-t_2)^2-|\vx_1-\vx_2|^2) = \frac{1}{2 |\vx_1-\vx_2|} [\delta(t_1-t_2-|\vx_1-\vx_2|) + \delta(t_1-t_2+|\vx_1-\vx_2|)]$. The difference is that the two $\delta$-functions have been replaced by a bounded function $f$. We then have:

\begin{theorem}[See {\cite[thm. 3.5]{mtve}}.] \label{thm:singularkernel3d}
	For every bounded $f:\R^8\to \C$ and every $\psi^\free \in \Banach_3$, Eq. \eqref{eq:inteq3dsimplified} with interaction kernel \eqref{eq:specialsingkernel} possesses a unique solution $\psi \in \Banach_3$.
\end{theorem}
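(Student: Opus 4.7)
My plan is to cast the equation as a fixed-point problem $\psi = \psi^\free + T\psi$ on $\Banach_3$, where $T$ is the integral operator on the right-hand side of \eqref{eq:inteq3dsimplified} with the singular kernel \eqref{eq:specialsingkernel}, and to prove convergence of the Neumann series $\sum_{n \ge 0} T^n \psi^\free$ in $\Banach_3$. The retarded Heaviside factors confine the spatial integration to the balls $|\vx_k - \vx_k'| \le t_k$, and the time argument $s_k = t_k - |\vx_k - \vx_k'|$ at which $\psi$ is evaluated lies strictly below $t_k$. Introducing polar coordinates $\vx_k' - \vx_k = r_k \omega_k$, the volume element $r_k^2 dr_k d\omega_k$ absorbs one factor of $|\vx_k - \vx_k'|^{-1}$ and produces the time increment $r_k dr_k = (t_k - s_k) ds_k$: this is the Volterra-in-time structure I intend to iterate, in the spirit of Theorem~\ref{thm:boundedkernels}.

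The central step is the Volterra inequality
\be
\|(T\psi)(t_1,\cdot,t_2,\cdot)\|_{L^2(\R^6)}^2 \le M \int_0^{t_1}\!\int_0^{t_2} (t_1-s_1)(t_2-s_2)\,\|\psi(s_1,\cdot,s_2,\cdot)\|_{L^2(\R^6)}^2\, ds_1\, ds_2,
\ee
with $M = M(\lambda, \|f\|_\infty, T)$. A direct Minkowski estimate fails because the factor $1/|\vx_1'-\vx_2'|$ couples the two spatial integration variables and multiplication by it is unbounded on $L^2(\R^6)$. My remedy is the weighted Cauchy--Schwarz split
\be
\frac{1}{|\vx_1-\vx_1'||\vx_2-\vx_2'||\vx_1'-\vx_2'|} = \frac{1}{|\vx_1-\vx_1'|^{1/2}|\vx_2-\vx_2'|^{1/2}|\vx_1'-\vx_2'|} \cdot \frac{1}{|\vx_1-\vx_1'|^{1/2}|\vx_2-\vx_2'|^{1/2}},
\ee
so that $|(T\psi)|^2$ is bounded by the product of (i) an $(\vx_1-\vx_2)$-dependent constant $A_0$ obtained by squaring the first factor and integrating over $(\vx_1',\vx_2')$, and (ii) the integral of the second factor squared against $|\psi|^2$. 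After Fubini and the polar change of variables above, part~(ii) collapses exactly onto the Volterra right-hand side with $\|\psi(s_1,\cdot,s_2,\cdot)\|_{L^2(\R^6)}^2$.

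The hard step is establishing the uniform bound $A_0^{\max} := \sup_{\vd \in \R^3} A_0(\vd) < \infty$, where
\be
A_0(\vd) = \int_{|\vec{u}_1|\le t_1}\int_{|\vec{u}_2|\le t_2} \frac{d^3\vec{u}_1\,d^3\vec{u}_2}{|\vec{u}_1|\,|\vec{u}_2|\,|\vec{u}_1 - \vec{u}_2 + \vd|^2},
\ee
with $\vec{u}_k = \vx_k' - \vx_k$ and $\vd = \vx_1 - \vx_2$; the exponent $2$ is borderline, since a joint rescaling $\vec{u}_k = \rho \vec{v}_k$ yields the radial integrand $\rho^{5-4} = \rho$, just integrable near $\rho = 0$. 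I would bound $A_0$ by fixing $\vec{u}_1$ and splitting the inner $\vec{u}_2$-integration into neighbourhoods of the two singular points $\vec{u}_2 = 0$ and $\vec{u}_2 = \vec{u}_1 + \vd$ and the bounded complement, exploiting the local integrability of $|\vec{u}|^{-1}$ and $|\vec{u}|^{-2}$ in $\R^3$ to estimate each piece uniformly in $(\vec{u}_1,\vd)$, and concluding by a second polar integration in $\vec{u}_1$; the $\vd$-dependence is then controlled because $A_0(\vd) \to 0$ as $|\vd| \to \infty$ and is continuous. Once the Volterra inequality is in hand, iterating it with the standard identity $L^n(1)(t) := \int_0^t(t-s)L^{n-1}(1)(s)\,ds = t^{2n}/(2n)!$ in each of the two time variables yields $\|T^n\|_{\Banach_3\to\Banach_3}^2 \le M^n T^{4n}/((2n)!)^2$. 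The factorial decay secures absolute convergence of the Neumann series for every $\lambda \in \C$; the resulting $\psi = \sum_{n\ge 0} T^n \psi^\free \in \Banach_3$ is the unique solution, uniqueness following from $\psi_1 - \psi_2 = T^n(\psi_1-\psi_2)$ for every $n$.
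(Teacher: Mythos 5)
This theorem is only quoted here from \cite{mtve} (Thm.\ 3.5); the present paper contains no proof of it. Your strategy---Cauchy--Schwarz splitting of the kernel so that the coupling factor $|\vx_1'-\vx_2'|^{-1}$ appears squared in a $\psi$-independent integral $A_0$, Fubini plus polar coordinates converting $r_k\,dr_k$ into $(t_k-s_k)\,ds_k$, a two-variable Volterra inequality, factorial decay $\|T^n\|^2\le M^nT^{4n}/((2n)!)^2$, and Neumann-series convergence for every $\lambda$ with uniqueness from $\psi_1-\psi_2=T^n(\psi_1-\psi_2)$---is essentially the route taken in the cited reference, and the iteration and uniqueness parts of your argument are correct as stated.

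One intermediate claim needs repair. It is \emph{not} true that, after fixing $\mathbf{u}_1$, the pieces of the inner $\mathbf{u}_2$-integral can be bounded uniformly in $(\mathbf{u}_1,\mathbf{d})$: writing $\mathbf{a}=\mathbf{u}_1+\mathbf{d}$, the inner integral $I(\mathbf{a})=\int_{|\mathbf{u}_2|\le t_2} d^3\mathbf{u}_2\,|\mathbf{u}_2|^{-1}|\mathbf{u}_2-\mathbf{a}|^{-2}$ diverges logarithmically as $|\mathbf{a}|\to 0$, because the two singular points coalesce into a non-integrable $|\mathbf{u}_2|^{-3}$. Concretely, with balls of radius $|\mathbf{a}|/2$ around $\mathbf{u}_2=0$ and $\mathbf{u}_2=\mathbf{a}$ the two ball contributions are $O(1)$, but the complement contributes on the order of $\ln\big(t_2/|\mathbf{a}|\big)$ (and a matching lower bound holds), so no uniform bound on each piece exists. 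The fix stays entirely within your scheme: record the bound $I(\mathbf{u}_1+\mathbf{d})\le C\big(1+\ln_+(t_2/|\mathbf{u}_1+\mathbf{d}|)\big)$ and only then integrate over $\mathbf{u}_1$ against the weight $|\mathbf{u}_1|^{-1}$ on $|\mathbf{u}_1|\le t_1$; since $|\cdot|^{-1}\in L^2$ of the ball and the logarithm lies in every $L^p_{\mathrm{loc}}$ with its $L^2$ norm over a ball of radius $t_1$ maximized when the ball is centered at its singularity, Cauchy--Schwarz yields $A_0(\mathbf{d})\le A_0^{\max}<\infty$ uniformly in $\mathbf{d}$ and in $t_1,t_2\in[0,T]$. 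With this correction your Volterra inequality, and hence the whole proof, goes through. A minor further point worth a sentence in a full write-up: since elements of $\Banach_3$ are equivalence classes and $\psi$ is evaluated at the retarded times $t_k-|\vx_k-\vx_k'|$ inside the integral, one should fix a representative and note that the final estimate is independent of this choice, as the paper does explicitly in its closed-FLRW argument.
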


%%%
\paragraph{Remark.} The choice of the function space $\Banach_d = L^\infty \big([0,T]^2_{(t_1,t_2)}, L^2(\R^{2d}_{(\vx_1,\vx_2)}) \big)$ is motivated by the expectation that in quantum physics the spatial $L^2$ norm of $\psi$ should be finite for each fixed time.  Besides, as $\psi^\free$ appears in the integral equation, $\Banach_d$ has to be such that $\psi^\free$, the solution of the free KG equation, lies in $\Banach_d$, which is the case for our choice. 
Some readers have suggested replacing $\Banach_d$ by a space of functions $\psi$ on $\R^8$ for which $\|\psi(t_1,\cdot,t_2,\cdot)\|_{L^2}$ falls off as $t_1,t_2\rightarrow \pm \infty$, but such a space would not include $\psi^\free$. 

%%%%%%%%%%%%
%%%%%%%%%%%%
\section{The integral equation on curved spacetimes} \label{sec:inteqcurved}

In this section, we find a natural analog of the integral equation \eqref{eq:inteq} on curved spacetimes. Then we calculate the Green's functions of the massless KG equation on flat and closed FLRW spacetimes. This allows us to write down a number of explicit examples for the integral equation on these spacetimes.

%%%%%
\subsection{Formulation of the integral equation on general curved spacetimes} \label{sec:inteqcurvedgeneral}

In order to extend the integral equation \eqref{eq:inteq} to curved spacetimes, we  examine how to generalize its ingredients.
\begin{enumerate}
	\item[(a)] \textit{Free wave equations.} The most important relativistic quantum-mechanical wave equations, the Dirac equation and the KG equation, have a canonical generalization to curved spacetimes. We shall focus on the KG case for simplicity. Now, the KG equation with the appropriate coupling to the scalar curvature of a curved spacetime with Lorentzian metric $g$ reads \cite{john,penrose_1964}:
\be
	(\square_g + m^2 - \xi R)\psi = 0,
	\label{eq:curvedkg}
\ee
where $\square_g = \nabla_\mu \nabla^\mu$ and $\nabla_\mu$ denotes the covariant derivative with respect to the metric $g_{ab}$ on the spacetime manifold $\mathcal{M}$, as well as $\xi = (d-1)/(4d) = \frac{1}{6}$ for $d=3$ space dimensions; $R$ stands for the Ricci scalar. To include the term $\xi R\, \psi$ is natural because \eqref{eq:curvedkg} is, like the massless Dirac equation, invariant under conformal transformations, see \eqref{eq:kgconformaltrafo} below.
So, $\psi^\free$ in \eqref{eq:inteq} is taken to be a solution of 
\be\label{freeKGcurved}
(\square_{g,k} + m_k^2 - \xi R)\psi^\free(x_1,x_2) = 0,~~~k=1,2.
\ee
\item[(b)] \textit{Spacetime volume elements.} The spacetime volume elements $dV(x_i)$ for curved spacetime are given by $dV(x_i) = \sqrt{-g(x)} \, d^{1+d}x$ where $g(x)$ is the metric determinant and $d^{1+d}x$ stands for the coordinate volume element.
\item[(c)] \textit{Green's functions} of \eqref{eq:curvedkg} are bi-scalar distributions $G(x,x')$ which satisfy
 \be
	(\square_g + m^2 - \xi R(x)) G(x,x') = [-g(x)]^{-1/2} \delta^{(1+d)}(x,x'),
	\label{eq:kggreensfunction}
\ee
where the $\delta$-distribution on curved space-time is defined by
\be
	\int d^{1+d} x' \, \delta^{(1+d)}(x,x') f(x') = f(x)
\ee
for every scalar test function $f(x)$.\\
In the integral equation \eqref{eq:inteq}, we then replace the Minkowski Green's functions $G_k(x_k-x_k')$ by the respective Green's functions $G_k(x_k,x_k')$ on curved spacetime.
\item[(d)] \textit{Interaction kernel.} As mentioned before, a very natural interaction kernel is the symmetric Green's function of the massless KG equation. From the previous points it is clear that this choice possesses an immediate generalization to curved spacetime. Apart from this natural choice, one can also use different interaction kernels, corresponding to different kinds of interaction. It is, however, important that the interaction kernel is formulated in a manifestly covariant way.
\end{enumerate}
With these ingredients, it is clear what \eqref{eq:inteq} means for curved spacetimes. The next step is to find an explicit formulation for certain spacetimes that feature a Big Bang singularity. This mainly requires determining the Green's functions explicitly.

%%%%%
\subsection{FLRW spacetimes} \label{sec:flrw}

Perhaps the best-known spacetimes which feature a Big Bang singularity are the \textit{Friedmann-Lema\^itre-Robertson-Walker (FLRW) spacetimes} (see e.g. \cite[chap. 5.3]{hawking_ellis}). They describe homogeneous and isotropic universes. We shall consider these spacetimes for the rest of the paper. In 1+$d$ spacetime dimensions with $d=2,3$,\footnote{For $d=1$, this form reduces to $ds^2 = a^2(\eta)[d\eta^2 - dz^2]$.} the metric takes the form
\be
	ds^2 = a^2(\eta) \left[ d \eta^2 - dr^2 - s^2_k(r) d \Omega^2_{d-1} \right],
	\label{eq:conformaltimeflrw}
\ee
where $\eta$ is conformal time and $\Omega^2_d$ denotes the canonical metric of the unit sphere $S^d$ in $\R^{d+1}$. There are three cases for the function $s^2_k(r)$:
\be
	s^2_k(r) = \left\{ \begin{array}{cl} \sinh^2 (r),&{\rm open} ~(k=-1)\\  r^2,&{\rm flat}~ (k=0)\\  \sin^2(r),&{\rm closed}~ (k=1).\end{array} \right.
\ee

The function $a(\eta)$ is called the \textit{scale function}. In principle, it could be an arbitrary function of $\eta$. However, for plausible models of matter (such as ``dust'' and ``radiation''), $a(\eta)$ is determined by the Einstein equations. The following table contains some examples (cf.\ Tab.~5.1 in \cite{Wald}).
\begin{table}[h]
	\centering
  	\begin{tabular}{  | c | c | c |}
    		\hline
			&&\\[-3mm]
    		$k$ & dust & radiation \\[2mm] \hline \hline
			&&\\[-3mm]
    		$-1$ &  $\cosh \eta -1$ & $\sinh \eta$ \\[2mm]
    		\hline
			&&\\[-3mm]
    		0 & $\eta^2$ & $|\eta|$ \\[2mm] \hline
    		&&\\[-3mm]	
			1 & $1-\cos \eta$ & $\sin \eta$ \\[2mm] \hline
  	\end{tabular}
 	\caption{Scale factor $a(\eta)$ (up to constant prefactors) in 1+3 spacetime dimensions}
	\label{tab:scalefactor}
\end{table}

\noindent One can see that in the flat and open cases, the scale factor has one exactly one root. This corresponds to a universe which starts with a Big Bang and expands forever thereafter. In the closed case, there is also a Big Crunch in addition to the Big Bang, corresponding to the next root at $\eta = 2\pi$ or $\eta = \pi$, respectively.
The physically relevant spacetime topologies in the three cases are: open $[0,\infty) \times \R^3$, flat $[0,\infty) \times \R^3$, closed $[0,1] \times S^3$. The exact form of $a(\eta)$ will not be relevant for our results, as we can cover whole classes of functions $a(\eta)$, in fact for any dimension $d=1,2,3$.

With this information, we are now ready to turn to calculating the Green's functions.

%%%%%
\subsection{Green's functions of the massless KG equation on FLRW spacetimes} \label{sec:greensfnscurved}

It is, in general, difficult to calculate Green's functions of covariant wave equations explicitly, and only a few examples are known (see, e.g., \cite{haas_poisson}). However, the task is simpler in case the equation under consideration is invariant under conformal transformations, and the spacetime  $(\widetilde{\mathcal{M}}, \widetilde{g}_{ab})$ on which the Green's functions are to be determined is conformally equivalent to another spacetime $(\mathcal{M}, g_{ab})$ on which we already know the Green's functions of the same wave equation. 
Following \cite{john}, we shall explain the method available for this case using the example of the massless KG equation \eqref{eq:curvedkg}, which in fact is conformally invariant (up to appropriate weight factors).\footnote{The same holds true for the massless Dirac equation, see e.g. \cite[chap. 5.7]{penrose_rindler}.}

Let the metrics of the two spacetimes $\mathcal{M}$ and $\widetilde{\mathcal{M}}$ be related by a conformal factor $\Omega(x)$, i.e.,
\be
	\widetilde{g}_{ab} = \Omega^2 \, g_{ab}.
\ee
Then the following relation holds between massless KG operator $(\square_g - \xi R)$ on $(\mathcal{M},g_{ab})$ and the massless the KG operator $(\square_{\widetilde{g}}- \xi\widetilde{R})$ on $(\widetilde{\mathcal{M}},\widetilde{g}_{ab})$ (see \cite{silenko} and \cite[eq. (5.7.19)]{penrose_rindler}):
\be
	\left( \square_g - \xi R \right) ~=~ \Omega^{\frac{d+3}{2}} \left( \square_{\widetilde{g}}- \xi\widetilde{R} \right) \Omega^{-\frac{d-1}{2}}.
	\label{eq:kgconformaltrafo}
\ee
Consequently, if $\phi$ satisfies $( \square_g - \xi R ) \phi = 0$, then $\widetilde{\phi} = \Omega^{\frac{d-1}{2}} \phi$ fulfills $( \square_{\widetilde{g}}- \xi \widetilde{R} ) \widetilde{\phi} = 0$.

Eq.~\eqref{eq:kgconformaltrafo} allows us to deduce how the Green's functions transform under conformal transformations (see \cite[eq. (8)]{john}). Let $G(x,x')$ be a Green's function of $(\square_g- \xi R )$ in the sense of \eqref{eq:kggreensfunction}. Then
\be
	\widetilde{G}(x,x') ~=~ \Omega^{-\frac{d-1}{2}}(x) \, \Omega^{-\frac{d-1}{2}}(x')\, G(x,x')
	\label{eq:greensfntrafo}
\ee
is a Green's function of $( \square_{\widetilde{g}}- \xi \widetilde{R} )$, i.e.,
\be
	\left( \square_{\widetilde{g}}- \xi \widetilde{R} \right) \widetilde{G}(x,x') = [-\widetilde{g}(x)]^{-1/2} \,\delta^{(1+d)}(x,x').
\ee
This can easily be verified using \eqref{eq:kgconformaltrafo}. Note that for $d = 1$, we have that $\widetilde{\phi} = \phi$ and $\widetilde{G} = G$.

In the case with mass, the method still works but is not very useful as the two mass terms are related by:
\be
	\widetilde{m} = \Omega^{-1} \, m,
\ee
independently of $d$. That means, not both mass terms $m, \widetilde{m}$ can be constant.

We shall now employ this method to explicitly calculate the Green's functions of \eqref{eq:curvedkg} on certain FLRW spacetimes in the massless case.

%%%
\paragraph{Flat FLRW spacetimes.}

In the flat case, one can see from \eqref{eq:conformaltimeflrw} that the FLRW spacetimes are conformally equivalent to Minkowski spacetime, with $\Omega(x) = a(\eta)$. Thus, Eq. \eqref{eq:greensfntrafo} allows us to deduce the massless Green's functions of \eqref{eq:curvedkg} from Tab. \ref{tab:greensfnsmink}. The results are summarized in Tab. \ref{tab:greensfnsflatflrw}. We specify the symmetric Green's function $G^\sym(x,x')$. The retarded Green's function $G^\ret(x,x')$ is then given by $G^\ret(\eta,\vx, \eta',\vx') = H(\eta-\eta') \, G^\sym(\eta,\vx, \eta',\vx')$. Moreover, we use the abbreviation $x^2 = \eta^2 - |\vx|^2$.
\begin{table}[h]
\centering
\begin{tabular}{|l|l|}
\hline
	&\\[-3mm]
	$d$ & $G^\sym(x,x')$\\[2mm]
	\hline \hline
	&\\[-3mm]
	1 & $\frac{1}{2} H((x-x')^2)$\\[2mm]
	\hline
	&\\[-3mm]
	2 & $\frac{1}{2\pi} \frac{1}{[a(\eta) a(\eta')]^{1/2}} \frac{H((x-x')^2)}{\sqrt{(x-x')^2}}$\\[2mm]
	\hline
	&\\[-3mm]
	3 & $\frac{1}{2\pi} \frac{1}{a(\eta) a(\eta')} \delta((x-x')^2)$\\[2mm]
	\hline
\end{tabular}
\caption{Symmetric Green's functions of the massless KG equation on a flat FLRW universe}
\label{tab:greensfnsflatflrw}
\end{table}

%%%
\paragraph{Open FLRW spacetime for $d=3$.}
The open FLRW spacetime is globally conformally equivalent to a static spacetime with metric
\be
	d s^2 = d\eta^2 - dr^2 - \sinh^2 r \, d \Omega^2_2.
\ee
This makes it possible to calculate the Green's functions in a similar way as above, with the result \cite[eq. (9.1)]{alertz}
\be
	G^{\ret/\adv}(\eta,\vx,\eta',\vx') = \frac{1}{a(\eta) a(\eta')} \frac{\delta(\eta-\eta'\mp s(\vx,\vx'))}{4\pi \sinh(s(\vx,\vx'))},
\ee
where $s(\vx,\vx')$ is the geodesic distance of the points $\vx, \vx'$ in the hyperbolic 3-space $\mathbb{H}^3$ with metric $dr^2 + \sinh^2 r \, d \Omega^2_2$.

%%%
\paragraph{Closed FLRW spacetime for $d=3$.}

In the closed case, the easiest way to calculate the Green's functions is to exploit the conformal equivalence of the closed FLRW universe with (a part of) the Einstein static universe (ESU). The latter has the topology $\R \times S^3$ and the metric
\be
	ds^2 = d\eta^2 - r_0^2(dr^2 + \sin^2 r \, d \Omega^2_2).
\ee
One can see from \eqref{eq:conformaltimeflrw} that the closed FLRW universe is conformally equivalent to the part of the ESU with $r_0=1$ and $\eta$ taking values between the first and the second root of $a(\eta)$ (such as $\eta \in [0,2\pi]$ for dust and $\eta \in [0,\pi]$ for radiation). The conformal factor is again simply given by $\Omega(x) = a(\eta)$.

The explicit form of the Green's functions of the massless KG equation \eqref{eq:curvedkg} on the ESU has been calculated in the literature. For example, in \cite[eq. (7)]{dowker_76} (see also \cite{dowker_71}) the massless Feynman propagator is stated as:
\be
	G_F(x,x') = \frac{i}{4 \pi^2 r_0} \sum_{n=-\infty}^\infty \frac{s(q,q')+2\pi n r_0}{\sin(s(q,q')/r_0)} \frac{1}{(\eta-\eta')^2 - (s(q,q')+2\pi n r_0)^2 - i\varepsilon}
	\label{eq:dowkergreensfn}
\ee
Here, we denote points $x \in \R \times r_0 S^3$ as $x = (\eta,q)$ with $q \in r_0 S^3$; $s(q,q')$ stands for the geodesic distance of $q,q'$ on $r_0 S^3$. Furthermore,
\be
	\frac{1}{x - i \varepsilon} = {\rm P} \frac{1}{x} + i \pi \, \delta(x),
\ee
where ${\rm P}$ denotes the principal value. Note also that the infinite sum in \eqref{eq:dowkergreensfn} results from the fact that in the ESU images can travel around the spatial dimensions of the universe and arrive back at the same spatial location.

We are interested in Green's functions different from the Feynman propagator, as the latter does not vanish outside of the light cone.
According to \cite[p. 82]{fulling}, the Feynman propagator is related to the symmetric Green's function $G^\sym(x,x')$ as follows:
\be
	G^\sym(x,x') = \mathrm{Re} \, G_F(x,x').
\ee
For the case of \eqref{eq:dowkergreensfn}, this leads to:
\be
	G^\sym(x,x') = -\frac{1}{4 \pi r_0} \sum_{n=-\infty}^\infty \frac{s+2\pi n r_0}{\sin(s/r_0)}   \, \delta \left( (\eta-\eta')^2 - (s+2\pi n r_0)^2\right).
	\label{eq:gsymesu}
\ee
Thus, from \eqref{eq:greensfntrafo} we conclude that the symmetric Green's function of the closed FLRW universe is given by (denoting spacetime points by $x = (\eta,q)$ with $q \in S^3$ and now setting $r_0=1$):
\be
	G^\sym_{{\rm FLRW},k=1}(x,x') = -\frac{1}{4\pi}  \frac{1}{a(\eta) a(\eta')}\sum_{n=-\infty}^\infty \frac{s(q,q')+2\pi n}{\sin (s(q,q'))}   \, \delta \left( (\eta-\eta')^2 - (s(q,q')+2\pi n)^2\right).
	\label{eq:closedgreensfn}
\ee
Note that we have $\eta, \eta' \in [0,T]$ for some $T>0$. Thus, only finitely many terms in the sum are non-zero. For example, for a universe filled with dust, we have that $T=2\pi$ (see Tab. \ref{tab:scalefactor}). Then, noting that $0\leq s(q,q') \leq \pi$, one can see that the only terms which contribute to $G^\sym_{{\rm FLRW},k=1}$ are the ones with $n=0$ and $n=-1$. In general, the sum contains at most $\left\lfloor \frac{T}{\pi}\right\rfloor +1$ terms.

With these preparations, we are now ready to give a number of explicit examples for the integral equation \eqref{eq:inteq} on FLRW spacetimes.

%%%%%
\subsection{Explicit examples for the integral equation on FLRW spacetimes} \label{sec:explicitexamples}

In the following, we formulate the integral equation for two different cases. (a) Flatand (b) open FLRW spacetimes in the case of retarded Green's functions, as well as (c) closed FLRW spacetimes in the case of symmetric Green's functions. In both cases, we consider massless scalar particles. The reasons for focusing on theses cases are the following. First and foremost, these are cases in which the time interval of integration becomes finite (see \cite[Sec. 2.2]{mtve} for a detailed illustration of the problems that can arise with infinite intervals). Case (a) is similar to the one studied previously in \cite{mtve}, see Sec. \ref{sec:previousresults}, and ensures that the integral equation \eqref{eq:inteq} naturally attains a Volterra structure \eqref{Volterra} in the time variables. Compared to Minkowski spacetime, the retarded case seems more natural for the flat and open FLRW universes which feature a Big Bang but not a Big Crunch, i.e., an intrinsic asymmetry between past and future. For closed FLRW spacetimes on the other hand (case (c)), there exist both a Big Bang and a Big Crunch. The universe then does not have any preferred time direction, and the case of symmetric Green's functions appears most natural.

%%%
\subsubsection{Massless scalar particles on flat FLRW spacetimes in the retarded case}

For the spacetime volume elements of flat FLRW spacetime, we obtain, using coordinates $x = (\eta,\vx)$ with $\eta \in [0,\infty)$ and $\vx \in \R^d$:
\be
	dV(x) = a^{1+d}(\eta) \, d \eta \, d^d \vx.
\ee
Thus, using the Green's functions in Tab. \ref{tab:greensfnsflatflrw} we can write down the integral equation for $d=1,2,3$ as follows.

%%%
\paragraph{d=1:}
\begin{align}
	&\psi(\eta_1,z_1,\eta_2,z_2) = \psi^\free(\eta_1,z_1,\eta_2,z_2) + \frac{\lambda}{8} \int_0^{\eta_1} d\eta_1' \int_0^{\eta_2} d\eta_2' \int dz_1' \, dz_2'~a^2(\eta_1') \, a^2(\eta_2')\nonumber\\
&\times\,H(\eta_1-\eta_1'-|z_1-z_1'|)\,  H(\eta_2-\eta_2'-|z_2-z_2'|) \, H((\eta_1'-\eta_2')^2-|z_1'-z_2'|^2) \, \psi(\eta_1',z_1',\eta_2',z_2').
\label{eq:inteq1dflatflrw}
\end{align}
Here, the physically natural interaction kernel, given by the symmetric Green's function, $K(\eta_1,z_1,\eta_2,z_2) = \frac{1}{2} H((\eta_1-\eta_2)^2-|z_1-z_2|^2)$, is bounded. Moreover, the Big Bang singularity does not lead to any singularities in the integral equation.

We shall also study the integral equation with a general bounded interaction kernel $\widetilde{K}(\eta_1,z_1,\eta_2,z_2)$ instead of $K$:
\begin{align}
	&\psi(\eta_1,z_1,\eta_2,z_2) = \psi^\free(\eta_1,z_1,\eta_2,z_2) + \frac{\lambda}{4} \int_0^{\eta_1} d\eta_1' \int_0^{\eta_2} d\eta_2' \int dz_1' \, dz_2'~a^2(\eta_1') \, a^2(\eta_2')\nonumber\\
&\times\,H(\eta_1-\eta_1'-|z_1-z_1'|)\,  H(\eta_2-\eta_2'-|z_2-z_2'|) \, \widetilde{K}(\eta_1',z_1',\eta_2',z_2') \, \psi(\eta_1',z_1',\eta_2',z_2').
\label{eq:inteq1dflatflrwgeneralk}
\end{align}

%%%
\paragraph{d=2:} We obtain
\begin{align}
	\psi(\eta_1,\vx_1,\eta_2,\vx_2) = ~&\psi^\free(\eta_1,\vx_1,\eta_2,\vx_2) + \frac{\lambda}{(2\pi)^3} \frac{1}{[a(\eta_1) a(\eta_2)]^{1/2}} \int_0^{\eta_1} d\eta_1' \int_0^{\eta_2} d\eta_2' \int d^2 \vx_1'\, d^2 \vx_2'\nonumber\\
& ~\times~ a^2(\eta_1') \, a^2(\eta_2') ~\frac{H(\eta_1-\eta_1'-|\vx_1-\vx_1'|)}{\sqrt{(\eta_1-\eta_1')^2-|\vx_1-\vx_1'|^2}} \, \frac{H(\eta_2-\eta_2'-|\vx_2-\vx_2'|)}{\sqrt{(\eta_2-\eta_2')^2-|\vx_2-\vx_2'|^2}}\nonumber\\
& ~\times~ \frac{H((\eta_1'-\eta_2')^2-|\vx_1'-\vx_2'|^2)}{\sqrt{(\eta_1'-\eta_2')^2-|\vx_1'-\vx_2'|^2}} \,  \psi(\eta_1',\vx_1',\eta_2',\vx_2').
\label{eq:inteq2dflatflrw}
\end{align}
Here, the natural interaction kernel
\be
	K(\eta_1,\vx_1,\eta_2,\vx_2) = \frac{1}{2\pi} \frac{1}{[a(\eta_1) a(\eta_2)]^{1/2}} \frac{H((\eta_1-\eta_2)^2-|\vx_1-\vx_2|^2)}{\sqrt{(\eta_1-\eta_2)^2-|\vx_1-\vx_2|^2}}
	\label{eq:naturalkernel2dflatflrw}
\ee
is singular in two ways: it diverges as $\eta_1\to 0$ or $\eta_2\to 0$ because then the middle denominator tends to 0, and it diverges as $x_2$ approaches the light cone of $x_1$ because then the last denominator tends to 0. In addition, the singular factor $a^{-1/2}(\eta_1) a^{-1/2} (\eta_2)$ appears in front of the integral. We shall see in Sec. \ref{sec:existence} that this leads to a singular behavior of the wave function proportional to that factor for $\eta_1,\eta_2\rightarrow 0$.

We shall consider a simplified version of \eqref{eq:inteq2dflatflrw} where the natural interaction kernel \eqref{eq:naturalkernel2dflatflrw} is replaced by
 $a^{-1/2}(\eta_1) a^{-1/2}(\eta_2) \widetilde{K}(\eta_1,\vx_1,\eta_2,\vx_2)$, where $\widetilde{K}$ is a bounded function.
(We keep the singular prefactors resulting from the conformal transformation, as they cancel with the some of the conformal factors in the spacetime volume elements.)
This leads us to the simplified equation
\begin{align}
	\psi(\eta_1,\vx_1,\eta_2,\vx_2) =\ &\psi^\free(\eta_1,\vx_1,\eta_2,\vx_2) + \frac{\lambda}{(2\pi)^2} \frac{1}{[a(\eta_1) a(\eta_2)]^{1/2}} \int_0^{\eta_1} d\eta_1' \int_0^{\eta_2} d\eta_2' \int d^2 \vx_1'\, d^2 \vx_2'\nonumber\\
& ~\times~ a^2(\eta_1') \, a^2(\eta_2') ~\frac{H(\eta_1-\eta_1'-|\vx_1-\vx_1'|)}{\sqrt{(\eta_1-\eta_1')^2-|\vx_1-\vx_1'|^2}} \, \frac{H(\eta_2-\eta_2'-|\vx_2-\vx_2'|)}{\sqrt{(\eta_2-\eta_2')^2-|\vx_2-\vx_2'|^2}}\nonumber\\
& ~\times~ \widetilde{K}(\eta_1',\vx_1',\eta_2',\vx_2') \,  \psi(\eta_1',\vx_1',\eta_2',\vx_2').
\label{eq:inteq2dflatflrwgeneralk}
\end{align}

%%%
\paragraph{d=3:}
\begin{align}
	\psi(\eta_1,\vx_1,\eta_2,\vx_2) =\ & \psi^\free(\eta_1,\vx_1,\eta_2,\vx_2) + \frac{2\lambda}{(4\pi)^3} \frac{1}{a(\eta_1) a(\eta_2)} \int_0^{\eta_1} d\eta_1' \int d^3 \vx_1' \int_0^{\eta_2} d\eta_2' \int d^3 \vx_2'~  \nonumber\\
& ~\times~ a^2(\eta_1') \, a^2(\eta_2') \,\frac{\delta(\eta_1-\eta_1'-|\vx_1-\vx_1'|)}{|\vx_1-\vx_1'|}\, \frac{\delta(\eta_2-\eta_2'-|\vx_2-\vx_2'|)}{|\vx_2-\vx_2'|} \nonumber\\
& ~\times~\delta((\eta_1'-\eta_2')^2-|\vx_1'-\vx_2'|^2)\,  \psi(\eta_1',\vx_1',\eta_2',\vx_2').
\label{eq:inteq3dflatflrw}
\end{align}
Again, there is a singular prefactor in front of the integral. Furthermore, the interaction kernel 
\be
K(\eta_1,\vx_1,\eta_2,\vx_2) = \frac{1}{2\pi} a^{-1}(\eta_1) a^{-1}(\eta_2) \delta((\eta_1-\eta_2)^2-|\vx_1-\vx_2|^2)
\ee
is singular because of the $a^{-1}$ factors and because of the $\delta$-function. We shall therefore consider a simplified version of \eqref{eq:inteq3dflatflrw} where $K$ is replaced by $a^{-1}(\eta_1) a^{-1}(\eta_2) \widetilde{K}(\eta_1,\vx_1,\eta_2,\vx_2)$ where $\widetilde{K}$ is a bounded function. Formally integrating out the remaining $\delta$-functions then leads to:
\begin{align}
	&\psi(\eta_1,\vx_1,\eta_2,\vx_2) = \psi^\free(\eta_1,\vx_1,\eta_2,\vx_2) + 
\frac{\lambda}{(4\pi)^2}
	\frac{1}{a(\eta_1) a(\eta_2)} \int d^3 \vx_1' \, d^3 \vx_2'~  \nonumber\\
& ~\times~a^2(\eta_1-|\vx_1-\vx_1'|) \, a^2(\eta_2-|\vx_2-\vx_2'|)  \,\frac{H(\eta_1-|\vx_1-\vx_1'|)}{|\vx_1-\vx_1'|}\, \frac{H(\eta_2-|\vx_2-\vx_2'|)}{|\vx_2-\vx_2'|}\nonumber\\
& ~\times~ \widetilde{K}(\eta_1-|\vx_1-\vx_1'|,\vx_1',\eta_2-|\vx_2-\vx_2'|,\vx_2') \,\psi(\eta_1-|\vx_1-\vx_1'|,\vx_1',\eta_2-|\vx_2-\vx_2'|,\vx_2').
\label{eq:inteq3dflatflrwgeneralk}
\end{align}
Here, the Heaviside functions result from the lower limits $0 \leq \eta_i'$ in \eqref{eq:inteq3dflatflrw}.

%%%
\paragraph{Remark.} Note that in the integral equations \eqref{eq:inteq1dflatflrw}, \eqref{eq:inteq2dflatflrw}, and \eqref{eq:inteq3dflatflrw}, the $\eta_i'$-integrals run only from 0 to $\eta_i$. That is, we have obtained a Volterra structure as in \eqref{Volterra} in the time variables, which makes it possible to use the same methods for the existence and uniqueness proofs as in \cite{mtve} (see Sec. \ref{sec:existence}).

%%%
\subsubsection{Massless scalar particles on open FLRW spacetimes in the retarded case}
Here we consider only the case $d=3$ as it is the physically relevant case.
As in the flat case, we assume that $a(\eta)$ is a continuous function  with $a(0)=0$ and $a(\eta)>0$ for $\eta>0$. We also replace the physically natural interaction kernel
\be
K(\eta_1,\vx_1,\eta_2,\vx_2) = \frac{a^{-1}(\eta_1) a^{-1}(\eta_2) }{4\pi \sinh(s(\vx_1,\vx_2))} \left[\delta(\eta_1-\eta_2-s(\vx_1,\vx_2)) +  \delta(\eta_1-\eta_2+s(\vx_1,\vx_2)) \right]
\ee
by
\be
a^{-1}(\eta_1) a^{-1}(\eta_2) \widetilde{K}(\eta_1,\vx_1,\eta_2,\vx_2)\,,
\ee
where $\widetilde{K}$ is either a bounded function or only mildly singular. After using the remaining $\delta$-functions in the Green's functions $G_1^\ret, G_2^\ret$ to remove the time integrals, our integral equation reads:
\begin{align}
	&\psi(\eta_1,\vx_1,\eta_2,\vx_2) =  \psi^\free(\eta_1,\vx_1,\eta_2,\vx_2) + \frac{\lambda}{(4\pi)^2} \frac{1}{a(\eta_1) a(\eta_2)} \int d^3 \vx_1' \int d^3 \vx_2'~  \nonumber\\
& ~\times~ a^2(\eta_1') \, a^2(\eta_2') \,\frac{H(\eta_1-\eta_1'-s(\vx_1,\vx_1'))}{\sinh ( s(\vx_1,\vx_1'))}\, \frac{H(\eta_2-\eta_2'-s(\vx_2,\vx_2'))}{\sinh (s(\vx_2,\vx_2'))} \nonumber\\
& ~\times~\widetilde{K}(\eta_1-s(\vx_1,\vx_1'),\vx_1',\eta_2-s(\vx_2,\vx_2'),\vx_2')\,  \psi(\eta_1-s(\vx_1,\vx_1'),\vx_1',\eta_2-s(\vx_2,\vx_2'),\vx_2').
\label{eq:inteq3dopenflrw}
\end{align}
Here, $d^3\vx$ denotes the infinitesimal surface element of $\mathbb{H}^3$ and $s(\vx,\vx')$ stands for the geodesic distance of $\vx,\vx' \in \mathbb{H}^3$.

%%%
\paragraph{Remark.} As in the flat case, the domains of integration in \eqref{eq:inteq3dopenflrw} are limited by $\eta_1,\eta_2$. The equation thus has a Volterra feature. Apart from the replacement of $\R^3$ with $\mathbb{H}^3$, Eqs. \eqref{eq:inteq3dflatflrwgeneralk} and \eqref{eq:inteq3dopenflrw} are qualitatively very similar.

%%%%%
\subsubsection{Massless scalar particles on closed FLRW spacetimes in the time-symmetric case}
Here we consider only $d=3$ as well.
We assume that the scale function $a(\eta)$ is a continuous function with $a(0) = 0$ (corresponding to the Big Bang) and $a(T) = 0$ for some $T>0$ (corresponding to the Big Crunch) while $a(\eta) > 0$ for $\eta \in (0,T)$. Furthermore, we denote points $x$ in the closed FLRW universe by coordinates $(\eta,q)$ with $\eta \in [0,T]$ and $q \in S^3$.

Then our integral equation reads:
\begin{align}
	&\psi(\eta_1,q_1,\eta_2,q_2) = \psi^\free(\eta_1,q_1,\eta_2,q_2) - \frac{\lambda}{(4\pi)^3} \frac{1}{a(\eta_1) a(\eta_2)} \int_0^T d\eta_1' \int_0^T d\eta_2' \int d\Omega_3(q_1')\, d\Omega_3(q_2') \nonumber\\
& ~\times~ a^2(\eta_1') \, a^2(\eta_2') \sum_{l,m,n =-\infty}^\infty \frac{s(q_1,q_1')+2\pi l}{\sin (s(q_1,q_1'))}   \, \delta \left( (\eta_1-\eta_1')^2 - (s(q_1,q_1')+2\pi l)^2\right)\nonumber\\
&~\times~ \frac{s(q_2,q_2')+2\pi m}{\sin (s(q_2,q_2'))} \, \delta \left( (\eta_2-\eta_2')^2 - (s(q_2,q_2')+2\pi m)^2\right)\nonumber\\
&~\times~ \frac{s(q_1',q_2')+2\pi n}{\sin (s(q_1',q_2'))} \, \delta \left( (\eta_1'-\eta_2')^2 - (s(q_1',q_2')+2\pi n)^2\right) \, \psi(\eta_1',q_1',\eta_2',q_2').
\label{eq:inteq3dclosedflrw}
\end{align}
Here, $d\Omega_3$ denotes the infinitesimal surface element of $S^3$ and $s(q,q')$ stands for the geodesic distance of $q,q' \in S^3$.

The interaction kernel of \eqref{eq:inteq3dclosedflrw} is highly singular. As before, we shall therefore consider a simplified problem where the interaction kernel
\be
K(\eta_1,q_1,\eta_2,q_2) = \frac{-(4\pi)^{-1}}{a(\eta_1) a(\eta_2)} \sum_n \frac{s(q_1,q_2)+2\pi n}{\sin(s(q_1,q_2))} \delta((\eta_1-\eta_2)^2- (s(q_1,q_2)+2\pi n)^2)
\ee
is replaced by 
\be
a^{-1}(\eta_1) a^{-1}(\eta_2) \widetilde{K}(\eta_1,q_1,\eta_2,q_2)\,,
\ee
where $\widetilde{K}$ is either a bounded function or a function with only a specific type of singularity (see Thm. \ref{thm:existenceclosedsingular} for details).
We decompose the remaining $\delta$-functions according to
\begin{align}
	& \delta((\eta_i-\eta_i')^2- (s(q_i,q_i')+2\pi n)^2)\nonumber\\
&= \frac{1}{2 |s(q_i,q_i')+2\pi n|} \big[ \delta(\eta_i-\eta_i'-|s(q_i,q_i')+2\pi n|) + \delta(\eta_i-\eta_i'+|s(q_i,q_i')+2\pi n|)  \big].
\end{align}
Integrating out the $\delta$-functions then leads to
\begin{align}
	&\psi(\eta_1,q_1,\eta_2,q_2) = \psi^\free(\eta_1,q_1,\eta_2,q_2) + \frac{\lambda}{4(4\pi)^2} \frac{1}{a(\eta_1) a(\eta_2)}  \sum_{l,m =-\infty}^\infty  \int d\Omega_3(q_1')\, d\Omega_3(q_2') \nonumber\\
& ~\times~\frac{\sgn(s(q_1,q_1')+2\pi l)}{\sin (s(q_1,q_1'))}  \, \frac{\sgn(s(q_2,q_2')+2\pi m)}{\sin (s(q_2,q_2'))}\nonumber\\
& ~\times~ \sum_{\sigma_1, \sigma_2 = \pm 1} \Big[ \id_{[0,T]}(\eta_1+\sigma_1|s(q_1,q_1')+2\pi l|) \, \id_{[0,T]}(\eta_2+\sigma_2 |s(q_2,q_2')+2\pi m|) \nonumber\\
& ~\times~ a^2(\eta_1+\sigma_1|s(q_1,q_1')+2\pi l|)  \, a^2(\eta_2+\sigma_2|s(q_2,q_2')+2\pi m|)\nonumber\\
& ~\times~(\widetilde{K}\times \psi)(\eta_1+\sigma_1|s(q_1,q_1')+2\pi l|,q_1',\eta_2+\sigma_2|s(q_2,q_2')+2\pi m|,q_2') \Big],
\label{eq:inteq3dclosedflrwgeneralk}
\end{align}
where $\id_{[0,T]}$ denotes the indicator function of the interval $[0,T]$.
Note that \eqref{eq:inteq3dclosedflrwgeneralk} has a compact domain of integration. Moreover, as a consequence of $T$ being finite, the sums over $l,m$ contain only finitely many terms (for universes filled with dust these are only the terms with $l,m\in\{-1,0\}$, see Tab. \ref{tab:scalefactor}). This corresponds to the fact that in the closed FLRW spacetimes, light can travel around the spatial dimensions of the universe at most a finite number of times (at most once for dust).

However, note also that contrary to the retarded case for flat and open FLRW spacetimes, we do not obtain a Volterra structure of the equation. This makes it more difficult to prove the existence and uniqueness of solutions.

%%%%%%%%%%%%
%%%%%%%%%%%%
\section{Existence and uniqueness of solutions} \label{sec:existence}

We now prove the existence and uniqueness of solutions for the simplified models in the flat and closed cases. In the case of flat FLRW spacetimes, it is possible to reduce the problem to the one on the Minkowski half-space $\frac{1}{2} \R^{1,d}$ (Sec. \ref{sec:previousresults}). In case of closed FLRW spacetimes, we show that the integral operator in \eqref{eq:inteq3dclosedflrwgeneralk} is bounded. We then obtain the existence and uniqueness of solutions via Banach's fixed point theorem for small coupling constants $\lambda$.
The reason for not considering the open case here is that it is qualitatively similar to the flat case so that no great new insights are expected. At the same time the existence and uniqueness of solutions cannot be directly reduced to the results for a Minkowski half-space. One would have to redo large parts of the proofs in \cite{mtve} in a similar but slightly different way, replacing $\R^3$ with $\mathbb{H}^3$. We try to avoid a duplication here.

%%%%%
\subsection{Flat FLRW spacetimes in the retarded case} \label{sec:existenceflat}

We now show how to reduce the problem on flat FLRW spacetimes to the one described in Sec. \ref{sec:previousresults}.
First, we note that the transformation behavior \eqref{eq:kgconformaltrafo} of the KG operator under conformal transformations implies that we can obtain a solution $\widetilde{\phi}$ of the free KG equation on flat FLRW spacetime from a solution $\phi$ of that equation on $\frac{1}{2}\R^{1,d}$ (and vice versa) by
\be
	\widetilde{\phi} = a^{-\frac{d-1}{2}}(\eta) \, \phi.
\ee
This means that $\widetilde{\phi}$ diverges at the Big Bang (except for $d=1$). We therefore expect solutions of the integral equations \eqref{eq:inteq1dflatflrwgeneralk}, \eqref{eq:inteq2dflatflrwgeneralk}, and \eqref{eq:inteq3dflatflrwgeneralk} to be proportional to $a^{-\frac{d-1}{2}}(\eta_1) \,a^{-\frac{d-1}{2}}(\eta_2)$ for $\eta_1, \eta_2\rightarrow 0$. The precise statement reads as follows.

%%%
\begin{theorem} \label{thm:flatflrw}
	Let $T>0$, $d \in \{ 1,2,3\}$. Furthermore, let $a: [0,\infty) \rightarrow \infty$ be a continuous function with $a(0) = 0$ and $a(\eta)>0$ for $\eta > 0$, and $\widetilde{K} : ([0,\infty) \times \R^d)^2 \rightarrow \C$ be bounded.\\
 Then for every $\psi^\free$ with $a^{\frac{d-1}{2}}(\eta_1)\, a^{\frac{d-1}{2}}(\eta_2) \, \psi^\free \in \Banach_d$, the respective integral equation on the (1+$d$)--dimensional flat FLRW universe with scale function $a(\eta)$ (one of the integral equations \eqref{eq:inteq1dflatflrwgeneralk}, \eqref{eq:inteq2dflatflrwgeneralk}, or \eqref{eq:inteq3dflatflrwgeneralk}) has a unique solution $\psi$ with $a^{\frac{d-1}{2}}(\eta_1) \,a^{\frac{d-1}{2}}(\eta_2) \, \psi \in \Banach_d$ for $0 \leq \eta_1, \eta_2 \leq T$. ($\Banach_d$ is defined in \eqref{eq:banach}.)
\end{theorem}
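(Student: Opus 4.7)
The plan is to exploit the conformal equivalence between flat FLRW spacetime and the Minkowski half-space $\frac{1}{2}\R^{1,d}$ (with conformal factor $\Omega=a(\eta)$) in order to reduce each of the integral equations \eqref{eq:inteq1dflatflrwgeneralk}, \eqref{eq:inteq2dflatflrwgeneralk}, \eqref{eq:inteq3dflatflrwgeneralk} to the corresponding equation on Minkowski half-space \eqref{eq:inteq1dsimplified}, \eqref{eq:inteq2dsimplified}, \eqref{eq:inteq3dsimplified} (with $m_1=m_2=0$), to which Theorem \ref{thm:boundedkernels} applies directly. Motivated by the transformation law \eqref{eq:kgconformaltrafo}, I would introduce the rescaled wave function
\[
\widetilde\psi(\eta_1,\vx_1,\eta_2,\vx_2):=a^{(d-1)/2}(\eta_1)\,a^{(d-1)/2}(\eta_2)\,\psi(\eta_1,\vx_1,\eta_2,\vx_2)
\]
and analogously $\widetilde\psi^{\free}$. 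By hypothesis $\widetilde\psi^{\free}\in\Banach_d$, and the conclusion to be shown is equivalent to existence and uniqueness of $\widetilde\psi\in\Banach_d$ satisfying a correspondingly transformed integral equation; transforming back by dividing by $a^{(d-1)/2}(\eta_1)a^{(d-1)/2}(\eta_2)$ (well-defined as an equality of $L^\infty L^2$ representatives for $\eta_i>0$) then yields $\psi$.

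Next I would carry out the substitution explicitly. Multiplying both sides of each FLRW equation by $a^{(d-1)/2}(\eta_1)a^{(d-1)/2}(\eta_2)$ cancels the singular prefactor $a^{-(d-1)/2}(\eta_1)a^{-(d-1)/2}(\eta_2)$ sitting in front of the integral (trivial for $d=1$, equal to $[a(\eta_1)a(\eta_2)]^{-1/2}$ and $[a(\eta_1)a(\eta_2)]^{-1}$ for $d=2,3$). Inside the integrand, substituting $\psi=a^{-(d-1)/2}\widetilde\psi$ at the primed (for $d=1,2$) or shifted (for $d=3$) arguments combines with the factor $a^{2}(\eta_1')a^{2}(\eta_2')$ coming from the two spacetime volume elements to leave a total weight of $a^{(5-d)/2}(\eta_1')\,a^{(5-d)/2}(\eta_2')$. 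The outcome is that $\widetilde\psi$ satisfies an integral equation of exactly the form of \eqref{eq:inteq1dsimplified}, \eqref{eq:inteq2dsimplified}, or \eqref{eq:inteq3dsimplified} with $m_1=m_2=0$ and effective interaction kernel
\[
K^*(\eta_1',\vx_1',\eta_2',\vx_2')=a^{(5-d)/2}(\eta_1')\,a^{(5-d)/2}(\eta_2')\,\widetilde K(\eta_1',\vx_1',\eta_2',\vx_2').
\]
Since $a$ is continuous on the compact interval $[0,T]$ and $(5-d)/2\geq 1$ for $d\in\{1,2,3\}$, the factor $a^{(5-d)/2}$ is continuous and bounded on $[0,T]$; together with the boundedness of $\widetilde K$, this makes $K^*$ a bounded function, and Theorem \ref{thm:boundedkernels} then provides a unique $\widetilde\psi\in\Banach_d$.

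The work is essentially bookkeeping; the key conceptual point is that the three conformal weights---the weight $-(d-1)/2$ carried by each Green's function argument (cf.\ \eqref{eq:greensfntrafo}), the weight $1+d$ of the volume element, and the weight $(d-1)/2$ of the rescaling of $\psi$---balance out so that only nonnegative powers of $a$ remain in the transformed equation; this is precisely what is needed to absorb the Big Bang singularity $a(0)=0$ into the solution $\psi=a^{-(d-1)/2}(\eta_1)a^{-(d-1)/2}(\eta_2)\widetilde\psi$. The only additional point to verify is that for $d=3$ the shifted time arguments $\eta_i-|\vx_i-\vx_i'|$ still lie in $[0,T]$, which is ensured by the Heaviside functions $H(\eta_i-|\vx_i-\vx_i'|)$ in the integrand together with $\eta_i\leq T$, so that $a$ evaluated at those arguments is again bounded. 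Beyond these checks the proof reduces entirely to an application of Theorem \ref{thm:boundedkernels}.
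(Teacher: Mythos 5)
Your proposal is correct and follows essentially the same route as the paper: multiply through by $a^{(d-1)/2}(\eta_1)\,a^{(d-1)/2}(\eta_2)$, observe that the rescaled function satisfies the corresponding Minkowski half-space equation with the bounded effective kernel $a^{(5-d)/2}(\eta_1')\,a^{(5-d)/2}(\eta_2')\,\widetilde K$ (which reproduces the paper's $a^2$, $a^{3/2}$, $a$ kernels for $d=1,2,3$), and invoke Theorem \ref{thm:flatflrw}'s predecessor, Theorem \ref{thm:boundedkernels}. The only cosmetic quibble is that the $a^2(\eta_i')$ factors in the stated FLRW equations arise from the volume elements \emph{combined with} the conformal weights of the Green's functions and kernel (not from the volume elements alone), but your bookkeeping on the equations as written is correct.
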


\begin{proof}
	The idea is to reduce the statement to Thm.~\ref{thm:boundedkernels}. We consider each $d$ separately.

For $d=1$, we have $\frac{d-1}{2}=0$ and the integral equation \eqref{eq:inteq1dflatflrwgeneralk} is already of the form \eqref{eq:inteq1dsimplified} with $t_i \leftrightarrow \eta_i$ and $K(t_1,z_1,t_2,z_2) = a^2(t_1) a^2(t_2) \widetilde{K}(t_1,z_1,t_2,z_2)$. This kernels is bounded for $0 \leq t_1,t_2 \leq T$. Hence, Thm. \ref{thm:boundedkernels} yields the claim.

For $d=2$, we have $\frac{d-1}{2} = \frac{1}{2}$. Multiplying \eqref{eq:inteq2dflatflrwgeneralk} with  $a^{1/2}(\eta_1) a^{1/2}(\eta_2)$ and introducing $\chi = a^{1/2}(\eta_1) a^{1/2}(\eta_2) \psi$ as well as $\chi^\free = a^{1/2}(\eta_1) a^{1/2}(\eta_2) \psi^\free$ yields
\begin{align}
	\chi(\eta_1,\vx_1,\eta_2,\vx_2) = \ & \chi^\free(\eta_1,\vx_1,\eta_2,\vx_2) + \frac{\lambda}{(2\pi)^2} \int_0^{\eta_1} d\eta_1' \int_0^{\eta_2} d\eta_2' \int d^2 \vx_1'\, d^2 \vx_2'\nonumber\\
& \times~ a^{3/2}(\eta_1') \, a^{3/2}(\eta_2') ~\frac{H(\eta_1-\eta_1'-|\vx_1-\vx_1'|)}{\sqrt{(\eta_1-\eta_1')^2-|\vx_1-\vx_1'|^2}} \, \frac{H(\eta_2-\eta_2'-|\vx_2-\vx_2'|)}{\sqrt{(\eta_2-\eta_2')^2-|\vx_2-\vx_2'|^2}}\nonumber\\
& \times~ \widetilde{K}(\eta_1',\vx_1',\eta_2',\vx_2') \,  \chi(\eta_1',\vx_1',\eta_2',\vx_2').
\label{eq:inteq2dflatflrwgeneralk2}
\end{align}
So if $\psi$ solves \eqref{eq:inteq2dflatflrwgeneralk} for some $\psi^\free$, then $\chi$ solves \eqref{eq:inteq2dflatflrwgeneralk2} for $\chi^\free$. The converse also holds.\\
Now, \eqref{eq:inteq2dflatflrwgeneralk2} has the same form as \eqref{eq:inteq2dsimplified} with $K(t_1,\vx_1,t_2,\vx_2) = a^{3/2}(t_1) a^{3/2}(t_2) \widetilde{K}(t_1,\vx_1,t_2,\vx_2)$ and $m_1=m_2=0$. Thus, Thm. \ref{thm:boundedkernels} yields the claim.

For $d=3$, we have $\frac{d-1}{2} = 1$. Similarly as for $d=2$, multiplying \eqref{eq:inteq3dflatflrwgeneralk} by  $a(\eta_1) a(\eta_2)$ and introducing $\chi = a(\eta_1) a(\eta_2) \psi$ as well as $\chi^\free = a(\eta_1) a(\eta_2) \psi^\free$ shows that \eqref{eq:inteq3dflatflrwgeneralk} is equivalent to
\begin{align}
	&\chi(\eta_1,\vx_1,\eta_2,\vx_2) = \chi^\free(\eta_1,\vx_1,\eta_2,\vx_2) + \frac{\lambda}{(4\pi)^2} \int d^3 \vx_1' \, d^3 \vx_2'~  \nonumber\\
& ~\times~a(\eta_1-|\vx_1-\vx_1'|) \, a(\eta_2-|\vx_2-\vx_2'|)  \,\frac{H(\eta_1-|\vx_1-\vx_1'|)}{|\vx_1-\vx_1'|}\, \frac{H(\eta_2-|\vx_2-\vx_2'|)}{|\vx_2-\vx_2'|}\nonumber\\
& ~\times~ (\widetilde{K} \times \chi)(\eta_1-|\vx_1-\vx_1'|,\vx_1',\eta_2-|\vx_2-\vx_2'|,\vx_2').
\label{eq:inteq3dflatflrwgeneralk2}
\end{align}
Eq.~\eqref{eq:inteq3dflatflrwgeneralk2} is of the same form as \eqref{eq:inteq3dsimplified} with $K(t_1,\vx_1,t_2,\vx_2) = a(t_1) a(t_2) \widetilde{K}(t_1,\vx_1,t_2,\vx_2)$. Therefore, the claim is again reduced to Thm. \ref{thm:boundedkernels}. \qed
\end{proof}

\paragraph{Remarks.}
\begin{enumerate}
	\item The result covers the physically most natural interaction kernel for $d=1$. For $d=2$ and $d=3$, the physically most natural interaction kernels are singular (in addition to the singularities coming from the scale function) and therefore not covered by the theorem. Remarkably, however, Thm. \ref{thm:flatflrw} covers certain manifestly covariant interaction kernels. A class of examples is given by
\be
	\widetilde{K}(x_1,x_2) = 
	\begin{cases} f(d(x_1,x_2)) & \text{if $x_1,x_2$ are time-like related}\\
	0 & \text{else,}
	\end{cases}
\ee
where $d(x_1,x_2) = (|\eta_1-\eta_2|- |\vx_1-\vx_2|) \int_0^1 a(\tau\eta_1+(1-\tau)\eta_2) d\tau$ is the time-like distance of the spacetime points $x_1= (\eta_1,\vx_1)$ and $x_2 = (\eta_2,\vx_2)$, and $f$ is an arbitrary bounded function.
 The corresponding integral equations \eqref{eq:inteq1dflatflrwgeneralk}, \eqref{eq:inteq2dflatflrwgeneralk}, and \eqref{eq:inteq3dflatflrwgeneralk} define rigorous,  relativistic and interacting quantum dynamics in 1+1, 1+2, and 1+3 spacetime dimensions, respectively.
\item As a consequence of the Volterra structure of Eqs. \eqref{eq:inteq1dflatflrwgeneralk}, \eqref{eq:inteq2dflatflrwgeneralk}, and \eqref{eq:inteq3dflatflrwgeneralk}, we have that as $\eta_1, \eta_2 \rightarrow 0$, $\psi$ is asymptotically equal to $\psi^\free$, i.e.,
\be\label{etato0}
\lim_{\eta_1,\eta_2\to 0}a(\eta_1)a(\eta_2)\psi = 
\lim_{\eta_1,\eta_2\to 0}a(\eta_1)a(\eta_2)\psi^\free\,.
\ee
Moreover, if $\psi^\free$ is a solution of the free multi-time KG equations\eqref{freeKGcurved}, then it is itself determined by Cauchy data, for example by the right-hand side of \eqref{etato0} (along with data for $\partial \psi^\free/\partial \eta_i$), which thus plays the role of \textit{initial data for $\psi$ at the Big Bang.}
\end{enumerate}

%%%
\paragraph{Special singular interaction kernels for $d=3$.}
Besides bounded interaction kernels, we can also treat singular interaction kernels. In a similar way as Thm. \ref{thm:flatflrw} follows from Thm. \ref{thm:boundedkernels}, Thm. \ref{thm:singularkernel3d} implies the following result for $d=3$ and interaction kernels with a $1/|\vx_1-\vx_2|$--singularity:

\begin{theorem} \label{thm:singkernelflatflrw}
	Let $f:([0,\infty) \times \R^3)^2 \rightarrow \C$ be a bounded function. Then, under the same assumptions as in Thm. \ref{thm:flatflrw} for $d=3$ but with
	\be
		\widetilde{K}(\eta_1,\vx_1,\eta_2,\vx_2) = \frac{f(\eta_1,\vx_1,\eta_2,\vx_2)}{|\vx_1-\vx_2|},
	\ee
	the integral equation \eqref{eq:inteq3dflatflrwgeneralk} has a unique solution $\psi$ with $a(\eta_1) a(\eta_2) \psi \in \Banach_3$ for every $\psi^\free$ with $a(\eta_1) a(\eta_2) \psi^\free \in \Banach_3$.
\end{theorem}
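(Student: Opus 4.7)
The plan is to reduce Theorem \ref{thm:singkernelflatflrw} to Theorem \ref{thm:singularkernel3d} via the same change of unknown used in the proof of Theorem \ref{thm:flatflrw} for $d=3$. First, I would multiply the integral equation \eqref{eq:inteq3dflatflrwgeneralk} by $a(\eta_1)a(\eta_2)$ and introduce $\chi := a(\eta_1)\,a(\eta_2)\,\psi$ together with $\chi^\free := a(\eta_1)\,a(\eta_2)\,\psi^\free$. Exactly as in the derivation of \eqref{eq:inteq3dflatflrwgeneralk2}, this transforms the equation into one of the Minkowski half-space form \eqref{eq:inteq3dsimplified}, with effective interaction kernel
\begin{equation}
K(t_1,\vx_1,t_2,\vx_2) \;=\; a(t_1)\,a(t_2)\,\widetilde{K}(t_1,\vx_1,t_2,\vx_2) \;=\; \frac{a(t_1)\,a(t_2)\,f(t_1,\vx_1,t_2,\vx_2)}{|\vx_1-\vx_2|}.
\end{equation}

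The key observation is that this effective kernel has precisely the singular form \eqref{eq:specialsingkernel} covered by Theorem \ref{thm:singularkernel3d}: the $1/|\vx_1-\vx_2|$ factor is untouched by the rescaling, and the new numerator $\tilde{f}(t_1,\vx_1,t_2,\vx_2) := a(t_1)\,a(t_2)\,f(t_1,\vx_1,t_2,\vx_2)$ is a bounded function on $[0,T]^2 \times \R^6$. Boundedness is automatic: $a$ is continuous on the compact interval $[0,T]$, hence bounded there, while $f$ is bounded by hypothesis. Thus the hypotheses of Theorem \ref{thm:singularkernel3d} are satisfied with data $\chi^\free \in \Banach_3$ and singular kernel determined by $\tilde f$.

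I would then invoke Theorem \ref{thm:singularkernel3d} to obtain a unique solution $\chi \in \Banach_3$ of the transformed equation on $[0,T]^2 \times \R^6$. Setting $\psi := \chi/[a(\eta_1)a(\eta_2)]$ then yields a unique $\psi$ with $a(\eta_1)a(\eta_2)\psi \in \Banach_3$ solving \eqref{eq:inteq3dflatflrwgeneralk} with the singular kernel of the theorem; the equivalence between the two formulations is a tautology once $a(\eta_i)>0$ for $\eta_i>0$. There is no real obstacle here: the argument is effectively the $d=3$ case of Theorem \ref{thm:flatflrw} with Theorem \ref{thm:boundedkernels} replaced by Theorem \ref{thm:singularkernel3d}. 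The only point worth emphasising is that the conformal/scale-factor rescaling $\psi \mapsto \chi$ preserves rather than destroys the specific singular structure \eqref{eq:specialsingkernel}, simply multiplying the bounded numerator $f$ by the bounded prefactor $a(t_1)a(t_2)$.
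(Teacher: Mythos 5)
Your reduction is exactly the argument the paper has in mind: it states Thm.~\ref{thm:singkernelflatflrw} as following from Thm.~\ref{thm:singularkernel3d} "in a similar way as Thm.~\ref{thm:flatflrw} follows from Thm.~\ref{thm:boundedkernels}", i.e.\ via the substitution $\chi=a(\eta_1)a(\eta_2)\psi$ leading to \eqref{eq:inteq3dflatflrwgeneralk2} with effective kernel $a(t_1)a(t_2)f(t_1,\vx_1,t_2,\vx_2)/|\vx_1-\vx_2|$, whose numerator is bounded since $a$ is continuous on $[0,T]$ and the retarded arguments stay in $[0,T]$. Your proposal is correct and coincides with the paper's (implicit) proof.
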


%%%%%%
\subsection{Closed FLRW spacetimes in the time-symmetric case} \label{sec:existenceclosed}

For closed FLRW universes, the integral equation \eqref{eq:inteq3dclosedflrwgeneralk} does not have a Volterra structure. Therefore, the techniques we have used for flat FLRW spacetimes are not available. Instead, we show that the integral operator in \eqref{eq:inteq3dclosedflrwgeneralk} is a bounded operator on a suitable Banach space and therefore defines a contraction, provided the coupling constant $\lambda$ is small enough.\footnote{Similar smallness conditions are often needed to obtain the existence of solutions for general Fredholm integral equations (i.e., without a Volterra structure).} Then the existence and uniqueness of solutions follows from Banach's fixed point theorem.

First, note that the transformation behavior \eqref{eq:kgconformaltrafo} implies that $\psi$ now has a singularity proportional to $a^{-1}(\eta_1)a^{-1}(\eta_2)$ for $\eta_1, \eta_2 \rightarrow T$ as well as for $\eta_1, \eta_2 \rightarrow 0$.

We shall prove the following theorem.

%%%%%%
\begin{theorem} \label{thm:existenceclosedsingular}
	Let $T>0$ and $a: [0,T] \rightarrow [0,\infty)$ be a continuous function with $a(0) = 0 = a(T)$ and $a(\eta)>0$ for $\eta\in (0,T)$. Moreover, consider the Banach space
	\be
		\Banach = L^\infty \big( [0,T]^2, L^2((S^3)^2) \big)
	\ee
	with norm $\| \psi\|_\Banach = \esssup_{\eta_1,\eta_2\in [0,T]} \| \psi(\eta_1,\cdot,\eta_2,\cdot)\|_{L^2((S^3)^2)}$, and let $f: \big( [0,T] \times S^3 \big)^2 \rightarrow \C$ be a bounded function.

Then for every $\psi^\free$ with $a(\eta_1) a(\eta_2)\psi^\free \in \Banach$, the integral equation \eqref{eq:inteq3dclosedflrwgeneralk} with
\be
	\widetilde{K}(\eta_1,q_1,\eta_2,q_2) = \frac{f(\eta_1,q_1,\eta_2,q_2)}{\sin(s(q_1,q_2))}
	\label{eq:singularkclosed}
\ee
possesses a unique solution $\psi$ with $a(\eta_1) a(\eta_2)\psi \in \Banach$, provided the coupling constant $\lambda$ satisfies
\be
	|\lambda| < \left( \frac{\pi^2}{\sqrt{2}} \,\left(\left\lfloor \frac{T}{\pi}\right\rfloor +1\right)^2 \, \| a\|^2_\infty \, \| f \|_\infty\right)^{-1}.
	\label{eq:smalllambda2}
\ee
\end{theorem}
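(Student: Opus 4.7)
The plan is to reduce the problem to a Banach fixed-point equation after absorbing the Big Bang/Big Crunch singularities $1/(a(\eta_1)a(\eta_2))$ appearing as a prefactor of the integral in \eqref{eq:inteq3dclosedflrwgeneralk}. Concretely, I set $\chi := a(\eta_1)a(\eta_2)\psi$ and $\chi^\free := a(\eta_1)a(\eta_2)\psi^\free$, which are the natural objects in $\mathscr{B}$ since $\psi$ is expected to be singular like $1/(a(\eta_1)a(\eta_2))$ as $\eta_1,\eta_2\to 0$ or $\to T$. Multiplying \eqref{eq:inteq3dclosedflrwgeneralk} by $a(\eta_1)a(\eta_2)$, inserting \eqref{eq:singularkclosed} for $\widetilde K$, and using $a^2(\eta'')\psi(\eta'',\cdot)=a(\eta'')\chi(\eta'',\cdot)$, the equation becomes
\[
    \chi = \chi^\free + \lambda\, \mathcal{T}\chi
\]
on $\mathscr{B}$, where the integral operator $\mathcal{T}$ carries (i) the three spatial singular factors $1/\sin s(q_1,q_1')$, $1/\sin s(q_2,q_2')$, $1/\sin s(q_1',q_2')$; (ii) the bounded factor $f$; (iii) two factors $a(\cdot)$ at the shifted times, each bounded by $\|a\|_\infty$; and (iv) the finite double sum over $l,m\in\Z$ together with $\sigma_1,\sigma_2\in\{\pm 1\}$. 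If I can show that $\mathcal{T}$ is bounded on $\mathscr{B}$ with operator norm at most $C := \frac{\pi^2}{\sqrt 2}(\lfloor T/\pi\rfloor + 1)^2 \|a\|_\infty^2 \|f\|_\infty$, then \eqref{eq:smalllambda2} is exactly the statement $|\lambda|\,\|\mathcal{T}\|_{\mathscr{B}}<1$, so $F(\chi):=\chi^\free+\lambda\mathcal{T}\chi$ is a strict contraction on $\mathscr{B}$ and Banach's fixed-point theorem yields a unique fixed point $\chi\in\mathscr{B}$, hence a unique $\psi$ with $a(\eta_1)a(\eta_2)\psi\in\mathscr{B}$.

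The analytic core is the operator-norm estimate for $\mathcal{T}$. For fixed $\eta_1,\eta_2\in[0,T]$, fixed $l,m$ among the finitely many nonzero contributions (at most $(\lfloor T/\pi\rfloor+1)^2$ pairs, cf.\ the discussion after \eqref{eq:closedgreensfn}), and fixed $\sigma_1,\sigma_2$, I need to bound the $L^2((S^3)^2)$-norm of the spatial integral operator with kernel
\[
    K_0(q_1,q_2,q_1',q_2') \;=\; \frac{1}{\sin s(q_1,q_1')\,\sin s(q_2,q_2')\,\sin s(q_1',q_2')}
\]
acting on $g(q_1',q_2') := \chi\bigl(\eta_1+\sigma_1|s(q_1,q_1')+2\pi l|,\,q_1',\,\eta_2+\sigma_2|s(q_2,q_2')+2\pi m|,\,q_2'\bigr)$, whose $L^2$-norm is bounded by $\|\chi\|_\mathscr{B}$. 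The basic integrability inputs, which follow from the $S^3$ volume element $\sin^2 r\,dr\,d\Omega_2$ canceling powers of $\sin r$, are $\int_{S^3}d\Omega_3(q')/\sin s(q,q')<\infty$ and $\int_{S^3}d\Omega_3(q')/\sin^2 s(q,q')<\infty$, both independent of $q\in S^3$. I would then apply Schur's test with weights $p\equiv q\equiv 1$: the row-sum $\int K_0\,d\Omega_3(q_1')\,d\Omega_3(q_2')$ is estimated by first using Cauchy--Schwarz in $q_2'$ against the two singular factors involving $q_2'$, yielding a finite uniform bound, and then using the $L^1$ bound in $q_1'$; the column-sum is bounded symmetrically. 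Combining this with the prefactor $1/[4(4\pi)^2]$ from \eqref{eq:inteq3dclosedflrwgeneralk}, the factor $4$ from the $\sigma_i$-sums, the factor $(\lfloor T/\pi\rfloor+1)^2$ from the $l,m$-sums, and the factors $\|a\|_\infty^2\|f\|_\infty$ yields $\|\mathcal{T}\|_\mathscr{B}\le C$ with $C$ as above.

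The main obstacle is the coupling factor $1/\sin s(q_1',q_2')$ in $K_0$: it prevents $\mathcal{T}$ from factoring as a tensor product of one-body convolution operators on $L^2(S^3)$, and pointwise multiplication by $1/\sin s(q_1',q_2')$ is itself unbounded on $L^2((S^3)^2)$. The argument succeeds only because this cross-factor is sandwiched between the two $L^1$-integrable kernels $1/\sin s(q_i,q_i')$, so the joint kernel $K_0$ is still integrable against the constant function --- the key feature exploited by the unweighted Schur test. Care is needed where the singularities of the three factors can coincide (for instance, when $q_1'$ is close to both $q_1$ and $q_2'$), but in the Cauchy--Schwarz step the two potentially conflicting singularities in each integration variable are separated into different $L^2$-factors, each of which remains uniformly finite. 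Once this norm bound is secured, the rest of the proof is routine: the Neumann series for $(\id-\lambda\mathcal{T})^{-1}$ converges in operator norm on $\mathscr{B}$ under \eqref{eq:smalllambda2}, producing the unique solution $\chi\in\mathscr{B}$.
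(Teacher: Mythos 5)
Your overall strategy (the substitution $\chi=a(\eta_1)a(\eta_2)\psi$, the decomposition into finitely many terms indexed by $l,m,\sigma_1,\sigma_2$, and the contraction/Banach fixed-point argument under \eqref{eq:smalllambda2}) coincides with the paper's. The gap lies in the key analytic estimate. You reduce each summand, for fixed $(\eta_1,\eta_2)$, to ``the spatial operator with kernel $K_0$ acting on $g(q_1',q_2')=\chi(\eta_1+\sigma_1|s(q_1,q_1')+2\pi l|,q_1',\eta_2+\sigma_2|s(q_2,q_2')+2\pi m|,q_2')$, whose $L^2$-norm is bounded by $\|\chi\|_\Banach$,'' and then invoke a Schur test for $K_0$. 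This reduction is not legitimate: $g$ depends on the \emph{output} variables $q_1,q_2$ through the shifted time arguments, so the operator is not of the form ``fixed kernel applied to a fixed $L^2$ function,'' and the Schur bound on $K_0$ cannot simply be multiplied by $\|g\|_{L^2}$. Worse, the asserted inequality $\|g\|_{L^2((S^3)^2)}\le\|\chi\|_\Banach$ is false in general for elements of $L^\infty_t L^2_x$: $g$ is the trace of $\chi$ on a lower-dimensional, light-cone-like surface (the times are determined by $s(q_i,q_i')$), and such a trace is not controlled by $\esssup_t\|\chi(t,\cdot)\|_{L^2}$. For instance, a $\chi$ concentrated in a thin neighborhood of the surface $\eta_1'=\eta_1+\sigma_1|s(q_1,q_1')+2\pi l|$ has small spatial $L^2$-norm at every fixed time but arbitrarily large $\|g\|_{L^2}$; for a general representative of the equivalence class the quantity is not even well defined pointwise in $(\eta_1,\eta_2,q_1,q_2)$ --- a measure-zero subtlety the paper addresses explicitly and your argument ignores.

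The paper's proof circumvents exactly this difficulty by a specific ordering of steps: Cauchy--Schwarz in $(q_1',q_2')$ splits off only the cross factor $1/\sin^2 s(q_1',q_2')$ (integrated explicitly to $8\pi^4$); then the square of the $\Banach$-norm brings in the $d\Omega_3(q_1)\,d\Omega_3(q_2)$-integration, the order of integration is exchanged, and hyperspherical coordinates centered at $q_1',q_2'$ are used so that the Jacobian $\sin^2\alpha_i$ cancels the remaining factors $1/\sin^2 s(q_i,q_i')$. Crucially, after this change of variables the time arguments depend only on $\alpha_1,\alpha_2$, so the $(q_1',q_2')$-integral at fixed $\alpha_1,\alpha_2$ is exactly $\|\chi(t_1,\cdot,t_2,\cdot)\|^2_{L^2((S^3)^2)}$ at \emph{fixed} times, which can be bounded by $\|\chi\|^2_\Banach$ for a.e.\ $(t_1,t_2)$; since $\alpha_1,\alpha_2$ are then integrated over sets of positive measure, the a.e.\ bound suffices and is independent of the chosen representative. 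Your unweighted Schur test on $K_0$ is fine as a statement about $K_0$ alone (its row and column integrals are indeed finite on $S^3$), but it does not interact correctly with the $(q_1,q_2)$-dependent time shifts, which is precisely where the work in this theorem lies; in addition, you would still need to verify that your route reproduces the specific constant $\tfrac{\pi^2}{\sqrt2}\bigl(\lfloor T/\pi\rfloor+1\bigr)^2\|a\|^2_\infty\|f\|_\infty$ appearing in \eqref{eq:smalllambda2}, which you assert but do not compute.
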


\begin{remark}
	The structure of the singularity of $\widetilde{K}$ in \eqref{eq:singularkclosed} imitates the singularity of the Green's function \eqref{eq:closedgreensfn} for the closed FLRW spacetime (with the $\delta$-function replaced by a bounded function $f$). The case of bounded $\widetilde{K}$ is included in Thm. \ref{thm:existenceclosedsingular} for $f(\eta_1,q_1,\eta_2,q_2) \propto \sin(s(q_1,q_2))$. This is different from the case of flat FLRW spacetimes where the $1/|\vx_1-\vx_2|$ singularity cannot be exactly compensated by $|\vx_1-\vx_2|$ as the latter is not bounded on $\R^3\times \R^3$.
\end{remark}

\begin{proof}
	Let $\chi = a(\eta_1) a(\eta_2)\psi$ and  $\chi^\free = a(\eta_1) a(\eta_2)\psi^\free$. Then \eqref{eq:inteq3dclosedflrwgeneralk} is equivalent to
\begin{align}
	&\chi(\eta_1,q_1,\eta_2,q_2) = \chi^\free(\eta_1,q_1,\eta_2,q_2) +\frac{\lambda}{4(4\pi)^2} \sum_{l,m =-\infty}^\infty  \int d\Omega_3(q_1')\, d\Omega_3(q_2') \nonumber\\
& ~\times~\frac{\sgn(s(q_1,q_1')+2\pi l)}{\sin (s(q_1,q_1'))}  \, \frac{\sgn(s(q_2,q_2')+2\pi m)}{\sin (s(q_2,q_2'))}\nonumber\\
& ~\times~ \sum_{\sigma_1, \sigma_2 = \pm 1} \Big[ \id_{[0,T]}(\eta_1+\sigma_1|s(q_1,q_1')+2\pi l|) \, \id_{[0,T]}(\eta_2+\sigma_2 |s(q_2,q_2')+2\pi m|) \nonumber\\
& ~\times~ a(\eta_1+\sigma_1|s(q_1,q_1')+2\pi l|)  \, a(\eta_2+\sigma_2|s(q_2,q_2')+2\pi m|)\nonumber\\
& ~\times~\frac{1}{\sin(s(q_1',q_2'))\,} (f\times \chi)(\eta_1+\sigma_1|s(q_1,q_1')+2\pi l|,q_1',\eta_2+\sigma_2|s(q_2,q_2')+2\pi m|,q_2') \Big].
\label{eq:inteq3dclosedflrwgeneralk3}
\end{align}
Note that as an element of $\Banach$, $\chi$ is an equivalence class of functions modulo changes on sets of measure zero. In order to understand expressions such as $\chi(\eta_1+\sigma_1|s(q_1,q_1')+2\pi l|,q_1',\eta_2+\sigma_2|s(q_2,q_2')+2\pi m|,q_2')$ we choose an arbitrary representative of that class. Then we show that the integral operator in \eqref{eq:inteq3dclosedflrwgeneralk3} is bounded, with a bound that does not depend on the choice of the representative. In particular, this implies that \eqref{eq:inteq3dclosedflrwgeneralk3} is well-defined on $\Banach$.

The integral equation \eqref{eq:inteq3dclosedflrwgeneralk3} has the abstract structure
\be
	\chi = \chi^\free + \widehat{K} \chi,
\ee
where $\widehat{K} = \sum_{l, m, \sigma_1, \sigma_2} \widehat{K}_{\sigma_1\sigma_2}^{l m} $ and $\widehat{K}_{\sigma_1\sigma_2}^{l m} $ denotes the integral operator on the right hand side of \eqref{eq:inteq3dclosedflrwgeneralk3} that corresponds to the summand with fixed $l, m, \sigma_1,\sigma_2$.
We shall now show that each  $\widehat{K}_{\sigma_1\sigma_2}^{l m} $ is bounded and give an estimate for its operator norm. This estimate will be  independent of $\sigma_1, \sigma_2, l, m$.

Consider the norm of $\widehat{K}_{\sigma_1\sigma_2}^{l m} \chi$. Using the Cauchy-Schwarz inequality for the $d \Omega_3(q_1')\, d \Omega_3(q_2')$-integral as well as replacing $a$ and $f$ with their suprema implies:
\begin{align}
	& \| \widehat{K}_{\sigma_1\sigma_2}^{l m} \chi \|_\Banach \leq \esssup_{\eta_1,\eta_2 \in [0,T]} \frac{|\lambda|}{4(4\pi)^2} \, \| a\|^2_\infty \, \| f\|_\infty \left[ \int d \Omega_3(q_1)\, d \Omega_3(q_2) \right. \nonumber\\
&~ \times ~ \left( \int d \Omega_3(q_1')\, d \Omega_3(q_2')  \, \frac{1}{\sin^2 (s(q_1',q_2'))}  \right)\nonumber\\
&~ \times ~ \left( \int d \Omega_3(q_1')\, d \Omega_3(q_2')  \, \frac{1}{\sin^2 (s(q_1,q_1'))}  \, \frac{1}{\sin^2 (s(q_2,q_2'))}\right.\nonumber\\
&~ \times ~ \id_{[0,T]}(\eta_1+\sigma_1|s(q_1,q_1')+2\pi l|) \, \id_{[0,T]}(\eta_2+\sigma_2 |s(q_2,q_2')+2\pi m|)\nonumber\\
&~ \times ~|\chi|^2(\eta_1+\sigma_1|s(q_1,q_1')+2\pi l|,q_1',\eta_2+\sigma_2|s(q_2,q_2')+2\pi m|,q_2')  \Big)\Big]^{1/2}.
	\label{eq:closedcalc1}
\end{align}

We first consider the integral in the second line. Note that the geodesic distance of two points $q_1', q_2'$ on the 3-sphere is given simply by the angle $\alpha$ on the great circle that passes through $q_1', q_2'$. Hence, we can choose that angle $\alpha$ as one of the angles of the hyperspherical coordinates for the $d\Omega_3(q_2')$-integration for any fixed $q_1'$. That means, we use the coordinates
\begin{align}
	(q_2')^0 &= \cos \alpha_2,\nonumber\\
	(q_2')^1 &= \sin \alpha_2  \, \cos \beta_2, \nonumber\\
	(q_2')^2  &= \sin \alpha_2  \, \sin \beta_2 \,\sin \varphi_2, \nonumber\\
	(q_2')^3 &= \sin \alpha_2  \, \sin \beta_2 \, \sin \varphi_2,
\end{align}
where $\alpha_2, \beta_2 \in [0,\pi)$ and $\varphi_2 \in [0,2\pi)$. The surface element is given by $d\Omega_3(q_2') = \sin^2 \alpha_2 \, \sin \beta_2 \, d \alpha_2\, d \beta_2 \, d\varphi_2$.
 Then:
\begin{align}
	\int d \Omega_3(q_1')\, d \Omega_3(q_2')  \, \frac{1}{\sin^2 (s(q_1',q_2'))}  &= \int d \Omega_3(q_1') \int_0^\pi d \alpha_2 \int_0^\pi d \beta_2 \int_0^{2\pi} d \varphi_2 \, \frac{\sin^2 \alpha_2 \, \sin \beta_2}{\sin^2 \alpha_2}\nonumber\\
&= |S^3| \times 4\pi^2 = 8\pi^4.
\end{align}
Here, we have used $|S^3|  = 2\pi^2$. Hence, \eqref{eq:closedcalc1} becomes
\begin{align}
	& \| \widehat{K}_{\sigma_1\sigma_2}^{l m} \chi \|_\Banach \leq \esssup_{\eta_1,\eta_2 \in [0,T]} \frac{|\lambda|}{16\sqrt{2}} \, \| a\|^2_\infty \, \| f\|_\infty \left[\int d \Omega_3(q_1)\, d \Omega_3(q_2) \, d \Omega_3(q_1')\, d \Omega_3(q_2')   \right.\nonumber\\
&~ \times ~ \frac{1}{\sin^2 (s(q_1,q_1'))}  \, \frac{1}{\sin^2 (s(q_2,q_2'))}\nonumber\\
&~ \times ~ \id_{[0,T]}(\eta_1+\sigma_1|s(q_1,q_1')+2\pi l|) \, \id_{[0,T]}(\eta_2+\sigma_2 |s(q_2,q_2')+2\pi m|)\nonumber\\
&~ \times ~|\chi|^2(\eta_1+\sigma_1|s(q_1,q_1')+2\pi l|,q_1',\eta_2+\sigma_2|s(q_2,q_2')+2\pi m|,q_2') \Big]^{1/2}.
	\label{eq:closedcalc2b}
\end{align}
Now we exchange the order of the $d \Omega_3(q_1')\, d \Omega_3(q_2')$ and the $d \Omega_3(q_1)\, d \Omega_3(q_2)$-integrals. In the $d \Omega_3(q_1)\, d \Omega_3(q_2)$-integral we then change variables such that $s(q_i,q_i')$ is one of the angles of the hyperspherical coordinates. We obtain:
\begin{align}
	& \| \widehat{K}_{\sigma_1\sigma_2}^{l m} \chi \|_\Banach \leq \esssup_{\eta_1,\eta_2 \in [0,T]} \frac{|\lambda|}{16\sqrt{2}} \, \| a\|^2_\infty \, \| f\|_\infty \left[\int d \Omega_3(q_1')\, d \Omega_3(q_2')   \right.\nonumber\\
&~ \times ~\int_0^\pi d \alpha_1 \int_0^\pi d \beta_1 \int_0^{2\pi} d \varphi_1 \, \frac{\sin^2 \alpha_1 \, \sin \beta_1}{\sin^2 \alpha_1} \int_0^\pi d \alpha_2 \int_0^\pi d \beta_2 \int_0^{2\pi} d \varphi_2 \, \frac{\sin^2 \alpha_2 \, \sin \beta_2}{\sin^2 \alpha_2} \nonumber\\
&~ \times ~\id_{[0,T]}(\eta_1+\sigma_1|\alpha_1+2\pi l|) \, \id_{[0,T]}(\eta_2+\sigma_2 |\alpha_2+2\pi m|)\nonumber\\
&~ \times ~|\chi|^2(\eta_1+\sigma_1|\alpha_1+2\pi l|,q_1',\eta_2+\sigma_2|\alpha_2+2\pi m|,q_2') \Big]^{1/2}.
	\label{eq:closedcalc3b}
\end{align}
We now exchange the order of the integrations again and use the $d \Omega_3(q_1')\, d \Omega_3(q_2')$-integral to express the spatial norm of $\chi$. This yields:
\begin{align}
	& \| \widehat{K}_{\sigma_1\sigma_2}^{l m} \chi \|_\Banach \leq \esssup_{\eta_1,\eta_2 \in [0,T]} \frac{|\lambda|}{16\sqrt{2}}\, \| a\|^2_\infty \, \| f\|_\infty\nonumber\\
&~ \times ~\left[ \int_0^\pi d \alpha_1 \int_0^\pi d \beta_1 \int_0^{2\pi} d \varphi_1 \, \sin \beta_1 \int_0^\pi d \alpha_2 \int_0^\pi d \beta_2 \int_0^{2\pi} d \varphi_2 \, \sin \beta_2 \right.\nonumber\\
&~ \times ~\id_{[0,T]}(\eta_1+\sigma_1|\alpha_1+2\pi l|) \, \id_{[0,T]}(\eta_2+\sigma_2 |\alpha_2+2\pi m|)\nonumber\\
&~ \times ~\|\chi(\eta_1+\sigma_1|\alpha_1+2\pi l|,\cdot,\eta_2+\sigma_2|\alpha_2+2\pi m|,\cdot) \|^2_{L^2((S^3)^2)} \Big]^{1/2}.
	\label{eq:closedcalc4}
\end{align}
Next, we replace the expressions in last two lines by their essential suprema. This results in:
\begin{align}
	& \| \widehat{K}_{\sigma_1\sigma_2}^{l m} \chi \|_\Banach \leq \esssup_{\eta_1,\eta_2 \in [0,T]} \frac{|\lambda|}{16\sqrt{2}} \, \| a\|^2_\infty \, \| f\|_\infty \times 4\pi^2 \, \|\chi \|_\Banach \nonumber\\
&~ = \frac{\pi^2}{4\sqrt{2}} \, |\lambda| \, \| a\|^2_\infty \, \| f\|_\infty \,\|\chi \|_\Banach.
	\label{eq:closedcalc5}
\end{align}

This shows that each $\widehat{K}_{\sigma_1\sigma_2}^{l m}$ is bounded with the same norm. Now, there are four possible values of the pair $(\sigma_1, \sigma_2)$. Furthermore, as noted below \eqref{eq:closedgreensfn}, there are at most $\left(\left\lfloor \frac{T}{\pi}\right\rfloor +1\right)$ possible values of $l,m$ such that $\widehat{K}_{\sigma_1\sigma_2}^{l m}$ is non-zero. Overall, we obtain the estimate:
\be
	\| \widehat{K} \|_{\Banach \rightarrow \Banach} \leq  4 \left(\left\lfloor \frac{T}{\pi}\right\rfloor +1\right)^2 \, 	\| \widehat{K}_{11}^{00} \|_{\Banach \rightarrow \Banach} \frac{\pi^2}{\sqrt{2}} \, |\lambda| \,\left(\left\lfloor \frac{T}{\pi}\right\rfloor +1\right)^2 \, \| a\|^2_\infty \, \| f\|_\infty.
\ee
Thus, for
\be
	|\lambda| < \left( \frac{\pi^2}{\sqrt{2}}\,\left(\left\lfloor \frac{T}{\pi}\right\rfloor +1\right)^2 \, \| a\|^2_\infty \, \| f \|_\infty\right)^{-1}\,,
\ee
$\widehat{K}$ defines a contraction on $\Banach$ and Banach's fixed point theorem ensures the existence of a unique solution $\chi \in \Banach$ of \eqref{eq:inteq3dclosedflrwgeneralk3} for every $\chi^\free \in \Banach$. As \eqref{eq:inteq3dclosedflrwgeneralk3} is equivalent to \eqref{eq:inteq3dclosedflrwgeneralk}, the claim follows. \qed
\end{proof}

%%%
\paragraph{Remark.} As in the flat FLRW case, the theorem covers certain fully covariant interaction kernels (e.g., bounded interaction kernels that depend only on the time-like distance of $x_1,x_2$ and vanish when $x_1,x_2$ are not time-like related). Furthermore, it seems remarkable to us that it is possible to give rigorous meaning to a dynamics where interactions depend both on the past and the future (as they do in the classical Wheeler-Feynman theory \cite{WF1,WF2}). Note that contrary to the flat FLRW case with retarded Green's functions, $\psi^\free$ does not play the role of initial data on the Big Bang. Nevertheless, there is a one-to-one correspondence between free solutions $\psi^\free$ and solutions of the integral equation \eqref{eq:inteq3dclosedflrwgeneralk}. One can thus still understand the integral equation as determining a correction to $\psi^\free$ as a consequence of interaction.

%%%%%%%%%%%%
%%%%%%%%%%%%
\section{Conclusions} \label{sec:conclusion}

Here we have shown how the multi-time integral equation \eqref{eq:inteq}, which describes two directly interacting relativistic quantum particles, can be extended to curved spacetimes in a canonical way. Our motivation for doing this has been the fact that for spacetimes with a Big Bang singularity, the time integrals in \eqref{eq:inteq} do not extend to $-\infty$, and even become finite in the case of retarded interactions. In the case of time-symmetric interactions the time integrals become finite if, in addition, the spacetime has a Big Crunch singularity. Finite time integrals, in turn, make the integral equation easier to understand from a mathematical point of view and avoid potential divergences which could occur otherwise (see \cite[sec. 2.2]{mtve}).

The main difficulty with formulating \eqref{eq:inteq} on curved spacetimes is to obtain an explicit expression for the Green's functions that occur in the equation. It has been demonstrated how to obtain such explicit expressions for the massless Klein-Gordon equation with coupling to the scalar curvature on general FLRW spacetimes. That equation is conformally invariant, which has made it possible to calculate the Green's functions in a straightforward way. This has led us to the integral equations \eqref{eq:inteq1dflatflrw}, \eqref{eq:inteq2dflatflrw}, \eqref{eq:inteq3dflatflrw} on flat FLRW spacetimes with 1+1, 1+2, and 1+3 dimensions, as well as \eqref{eq:inteq3dopenflrw} on open and \eqref{eq:inteq3dclosedflrw} on closed FLRW spacetimes with 1+3 dimensions.

Our main results are existence and uniqueness theorems for simplified versions of these integral equations (simplified in the sense that interaction does not happen exactly along light cones but on extended spacetime regions). The results cover flat FLRW spacetimes and purely retarded interactions, as well as closed FLRW spacetimes and time-symmetric interactions. Open FLRW spacetimes have not been treated here as they are expected to be qualitatively similar to flat FLRW spacetimes. They could be covered by re-doing the proof in \cite{mtve} while replacing the flat 3-space $\R^3$ with the hyperbolic space $\mathbb{H}^3$.
Our results also include a characterization of the behavior of $\psi$ towards the spacetime singularities, as well as a parametrization of the solution space. This parametrization works via solutions of the free equations. This suggests to read the integral equation \eqref{eq:inteq} as determining a correction to a free solution as the consequence of interaction. In case of retarded interactions, we have seen that this way of classifying the solutions is equivalent to Cauchy data at the Big Bang singularity.

One may ask why we have taken the effort of doing explicit calculations with the Green's functions instead of viewing the integral equation \eqref{eq:inteq} as an abstract operator equation of the form
\be
	\big( 1 - \lambda\, (\widehat{G}^\ret \otimes \widehat{G}^\ret) \widehat{K}\big) \psi = \psi^\free
	\label{eq:abstractinteq}
\ee
where $\widehat{G}^\ret$ is the operator that convolves with the retarded Green's function $G^\ret$ and $\widehat{K}$ the multiplication operator with $K$. It is then easy to see that $\widehat{G}^\ret$ defines a bounded operator on $\Banach_d$, and consequently, if $\widehat{K}$ is bounded, the Neumann series yields the unique solution of \eqref{eq:abstractinteq} for
\be
	|\lambda| < \| \widehat{G}^\ret \|^{-2} \, \| \widehat{K} \|^{-1}.
\ee
That means, one obtains a general existence and uniqueness result with little effort. However, the result is rather weak. Our theorems for the case of retarded Green's functions hold for arbitrary $\lambda$. Furthermore, it is possible to treat unbounded multiplication operators $\widehat{K}$. To obtain a result for arbitrary $\lambda$ is especially important as the bound for $\| \widehat{G}^\ret \|$ may depend on $T$ (the time up to which one solves the equation), at least for the physically natural function space $\Banach_d$. Then a smallness assumption for $\lambda$ amounts to a short-time existence result. As we have shown, one can do much better than this by using the Volterra structure of the integral equation in the retarded case. Even in the time-symmetric case, where we do assume a smallness condition for $\lambda$, we are nevertheless able to treat certain unbounded operators $\widehat{K}$.

An existence and uniqueness result without a smallness condition on $\lambda$ could also be obtained if one could show that $(\widehat{G} \otimes \widehat{G}) \widehat{K}$ is a compact operator on $\Banach_d$. As the spectrum of a compact operator is discrete, this would yield the existence and uniqueness of \eqref{eq:abstractinteq} for almost all values of $\lambda$. If one could in addition prove uniqueness, then the Fredholm alternative would also imply the existence of solutions. While this strategy sounds promising in principle (especially to improve the result in the time-symmetric case), we do not, at present, know a good method to decide when $(\widehat{G}\otimes \widehat{G}) \widehat{K}$ is compact.

Despite being simplified, the class of equations for which our results are valid does include manifestly covariant possibilities. This is remarkable, considering that it is a difficult and long-standing problem in mathematical physics to rigorously construct interacting relativistic quantum dynamics in 1+3 spacetime dimensions. Our work opens up a possible new approach to solving this problem. We emphasize that the multi-time wave function is a crucial resource for this approach; without it, one could not have expressed direct relativistic interactions with time delay.

In the future, it would be desirable to extend our work in the following directions:
\begin{itemize}
	\item To use the Dirac equation instead of the Klein-Gordon equation as the free wave equation (see \cite{lienert_noeth_2019}),
   \item To consider the physically most natural (and highly singular) interaction kernels $K(x_1,x_2) = G^\sym(x_1,x_2)$ for 1+2 and especially 1+3 spacetime dimensions, and
	\item To extend the existence and uniqueness results to $N$ particles. (The $N$ particle generalization of the integral equation \eqref{eq:inteq} has been discussed in \cite{direct_interaction_quantum}.)
\end{itemize}

%%%
\paragraph{Acknowledgments.}~\\[1mm]
We would like to thank Fioralba Cakoni for helpful discussions.\\[1mm]
\begin{minipage}{15mm}
\includegraphics[width=13mm]{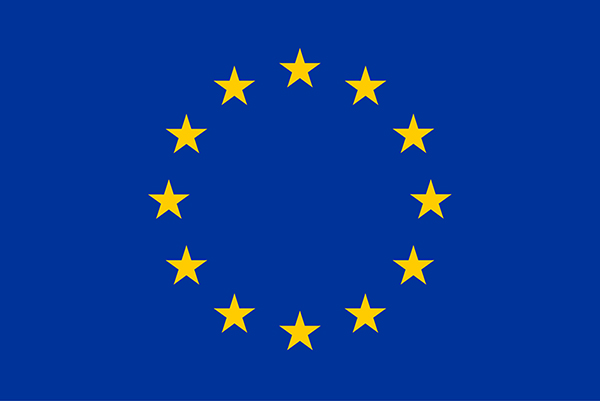}
\end{minipage}
\begin{minipage}{143mm}
This project has received funding from the European Union's Framework for Re-\\
search and Innovation Horizon 2020 (2014--2020) under the Marie Sk{\l}odowska-
\end{minipage}\\[1mm]
Curie Grant Agreement No.~705295.

%\bibliography{literature}

\end{document}